\documentclass[11pt]{article}
\usepackage[T1]{fontenc}
\usepackage{lmodern}
\usepackage[a4paper,margin=29mm]{geometry}
\usepackage{amsmath,amssymb,amsthm,mathtools}
\usepackage{booktabs,array,microtype,xcolor,needspace}
\usepackage[colorlinks=true,linkcolor=blue!45!black,citecolor=blue!45!black,urlcolor=blue!45!black]{hyperref}
\hypersetup{pdftitle={Dimension Amplification for Tarski Fixed-Point Query Lower Bounds},pdfauthor={Boyu Liu and Zihe Wang}}
\newtheorem{theorem}{Theorem}[section]
\newtheorem{lemma}[theorem]{Lemma}
\newtheorem{corollary}[theorem]{Corollary}
\newtheorem{proposition}[theorem]{Proposition}
\theoremstyle{definition}
\newtheorem{definition}[theorem]{Definition}
\theoremstyle{remark}

\DeclareMathOperator{\Fix}{Fix}
\newcommand{\one}{\mathbf{1}}
\newcommand{\HH}{\mathcal H}
\newcommand{\OO}{\mathcal O}
\newcommand{\ceil}[1]{\left\lceil #1\right\rceil}
\newcommand{\floor}[1]{\left\lfloor #1\right\rfloor}

\numberwithin{equation}{section}
\title{Dimension Amplification for Tarski\\Fixed-Point Query Lower Bounds}
\author{%
Boyu Liu\thanks{Gaoling School of Artificial Intelligence, Renmin University of China.
E-mail: \texttt{liuboyu03@ruc.edu.cn}.}
\and
Zihe Wang\thanks{Gaoling School of Artificial Intelligence, Renmin University of China.
E-mail: \texttt{wang.zihe@ruc.edu.cn}.}}
\date{October 2026}
\begin{document}
\maketitle
\begin{abstract}
We prove that finding a fixed point of a monotone map on the nine-dimensional grid
\([N]^9\) requires \(\Omega((\log N)^3)\) deterministic queries, even when the fixed
point is unique and each query returns the entire function value. The proof gives a construction
that raises the dimension from \(d\) to \(4d+1\), preserves uniqueness, and
adds a logarithmic factor to the lower bound. Iteration gives
\(\Omega((\log N)^{r+2})\) queries in dimension
\((7\cdot4^r-1)/3\), for every fixed nonnegative integer \(r\),
with an implicit constant that may depend on \(r\).
Consequently, no finite logarithmic exponent bounds the query complexity in all
fixed dimensions.
\end{abstract}

\section{Introduction}

Equilibrium computation seeks to understand the algorithmic content of
existence theorems. A theorem may guarantee a stable outcome without
providing an efficient way to locate one. For equilibria in games with
strategic complementarities, an important source of such guarantees is
Tarski's fixed-point theorem: every monotone self-map of a complete
lattice has a fixed point \cite{tarski,epry}. Understanding the
algorithmic content of this guarantee leads to the problem of finding a
Tarski fixed point.

In supermodular games, strategic complementarities mean that higher
actions by other players make higher actions more attractive. Suitable
selections of best responses form a monotone map, whose fixed points are
pure Nash equilibria. Etessami, Papadimitriou, Rubinstein, and Yannakakis
\cite{epry} established computational reductions between Tarski
fixed-point search and equilibrium computation in supermodular games,
and also gave reductions from stochastic games to Tarski. These connections
motivate a basic question: how much information about a monotone map is
needed to find a fixed point?

We study this question on the finite grid \([N]^d=\{1,\ldots,N\}^d\),
ordered coordinatewise. A map \(F:[N]^d\to[N]^d\) is monotone if
increasing input coordinates cannot decrease any output coordinate.
The task is to find a point \(x\) with \(F(x)=x\), given access to an
oracle that returns the entire vector \(F(x)\) on a query at \(x\).
This full-value model measures the information required by an algorithm,
with no restriction on computation between queries.

The two parameters are the dimension \(d\) and the side length \(N\).
Encoding each coordinate by its index in \(\{0,\ldots,N-1\}\) represents
a grid point with \(d\lceil\log_2 N\rceil\) bits. For fixed \(d\), a
query bound polynomial in \(\log N\) is therefore polynomial in the
encoding length of a query point. Iteration from the all-ones vector gives
an \(O(dN)\)-query algorithm, while recursive binary search finds a fixed point in
\(O((\log N)^d)\) queries for every fixed \(d\)
\cite{dqy}.\footnote{In bounds stated for a fixed dimension, the implicit
constants and the minimum side length for which the bound holds may
depend on the dimension, but not on the side length.}
Yet the exponent need not increase with every additional coordinate.

The tight complexity is
\(\Theta((\log N)^2)\) in dimensions two, three, and four
\cite{epry,fps,cly2026}. This raises the possibility that dimension might
affect only a multiplicative constant, rather than the exponent.
Chen and Li \cite[Section~5]{chenli} asked whether a cubic-logarithmic
lower bound holds in any constant dimension. More broadly, Fearnley et al.\
\cite{fps} and Chen, Li, and Yannakakis \cite[Section~9]{cly2026}
discussed the possibility of an \(O(f(d)(\log N)^2)\) upper bound.
We show that, for deterministic algorithms, no fixed finite exponent can
suffice in all fixed dimensions.

\subsection{Our results}

We prove a deterministic \(\Omega((\log N)^3)\) lower bound in dimension
nine, answering the cubic-logarithmic question in the deterministic
model. More generally, we construct instances requiring arbitrarily large
powers of \(\log N\) in suitable fixed dimensions. All these instances
have a unique fixed point, which allows the construction to be applied repeatedly.

For the formal statements, algorithms are deterministic and may choose
queries adaptively. An algorithm succeeds when it outputs a point \(x\)
with \(F(x)=x\); it may output a point that it has not queried.
Let \(Q_d(N)\) be the minimum worst-case number of queries for monotone
maps on \([N]^d\), and let \(U_d(N)\) be the corresponding complexity
under the promise that the fixed point is unique. Thus
\[
U_d(N)\le Q_d(N).
\]
Section~\ref{sec:model} formally defines the input families, oracle
algorithms, and query measures. Every logarithm in this paper is to base two.

\begin{theorem}[Lower-bound hierarchy]\label{thm:hierarchy}
For each fixed integer \(r\ge0\), put
\[
d_r=\frac{7\cdot4^r-1}{3}.
\]
There are constants \(c_r>0\) and \(N_r\) such that, for every integer
\(N\ge N_r\),
\[
U_{d_r}(N)\ge c_r(\log N)^{r+2}.
\]
In particular, \(U_9(N)=\Omega((\log N)^3)\).
\end{theorem}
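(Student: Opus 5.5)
The proof is an induction on \(r\). It combines a base case in dimension \(d_0=2\) with an amplification lemma that takes dimension \(d\) to \(4d+1\) while multiplying the lower bound by \(\log N\). The dimension recursion checks out: \(d_{r+1}=4d_r+1\), since \(4\cdot\frac{7\cdot4^r-1}{3}+1=\frac{7\cdot4^{r+1}-4+3}{3}\), and \(d_0=2\), \(d_1=9\).

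For the base case I need \(U_2(N)\ge c_0(\log N)^2\) under the uniqueness promise. The known \(\Omega((\log N)^2)\) lower bound for two dimensions \cite{epry} is built from a herringbone family in which the fixed point is unique. I will cite it and verify that the hard instances indeed have a unique fixed point; if the cited construction does not guarantee this, I will modify it so that off the path the map strictly points toward the path, which removes spurious fixed points.

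The main step is the amplification lemma:
\[
U_{4d+1}(N)\ \ge\ c\,\log N\cdot U_d(N')
\]
for some \(N'\) polynomially related to \(N\), say \(N'=N^{\Omega(1)}\). Then \(\log N'=\Theta(\log N)\) and constants only degrade by a factor depending on \(r\). The construction I have in mind is as follows. One distinguished coordinate runs a one-dimensional binary-search "spine" of length about \(\log N\) levels. At each level the algorithm must solve a fresh, independent \(d\)-dimensional unique-fixed-point instance, and the location of that instance's fixed point determines the direction of the next spine step. The factor \(4\) would come from using four \(d\)-dimensional blocks: two blocks carry the embedded instance and a copy that is monotonically "mirrored" (to convert an order-reversing dependence into a monotone one), and the other two act as gadgets that transmit the embedded fixed point's position monotonically to the spine coordinate. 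The extra \(+1\) is the spine coordinate itself.

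The lower bound would then be proved by an adversary argument. The adversary answers according to the one-dimensional adversary on the spine, and within each level it simulates the \(d\)-dimensional adversary. I will argue that a query informative about level \(j\) reveals at most one query's worth of information to the level-\(j\) subinstance and nothing about later levels, so an algorithm must spend \(U_d\) queries at each of \(\Omega(\log N)\) levels. Uniqueness is preserved because, at the final level, the spine point together with the unique embedded fixed points pin down a single fixed point.

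The main obstacle is making this composition simultaneously monotone and non-leaky. Inputs with large spine coordinate must receive outputs that dominate those with small spine coordinate, yet the inner instances at different levels must not reveal each other's fixed points. I would first try to make each level's inner map depend on the spine coordinate only through a monotone "clamp": below the active range the map is pushed to the bottom, above it to the top. Queries outside the active level then return trivially computable values, and a query inside it is answered by exactly one inner query. Verifying monotonicity across level boundaries, and that the mirrored blocks resolve the sign issues, is where the four copies become necessary. Once the lemma holds with constant \(c\) and \(N'=N^{\Omega(1)}\), iterating it \(r\) times from the base case yields \(U_{d_r}(N)\ge c_r(\log N)^{r+2}\) for \(N\ge N_r\), and \(r=1\) gives \(U_9(N)=\Omega((\log N)^3)\).
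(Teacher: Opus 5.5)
Your outer induction matches the paper's proof: a base case in dimension $2$ (the path map of Etessami et al.\ does have a unique fixed point), an amplification step $d\mapsto4d+1$ with $N'\approx\sqrt N$ followed by padding, and iteration. The gap is the amplification lemma, which carries the whole content of the theorem. You only gesture at it, and the mechanism you do specify would not prove it. You let the \emph{location} of each inner fixed point determine the spine direction. But the lemma must be applied to a black-box family: at the next iteration, all you know about the local maps is their query complexity. An adversary witnessing $U_d(n)$ merely keeps the fixed point undetermined. All surviving candidates may agree on whichever feature of the location encodes the direction, so the direction can be learned far more cheaply, and ``$U_d$ queries at each level'' does not follow. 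The paper instead attaches to each local map $H_i$ a sign $b_i$ that is independent of the map and is returned by the source oracle only when the fixed point $p_i$ is actually queried. Every combination of a threshold and a tuple of maps is a legal input, so separate per-map adversaries and a threshold adversary can run side by side under arbitrary interleaving. This gives $S_d(M,n)\ge h_d(n)\lceil\log(M-1)\rceil$, in terms of the hitting complexity $h_d$.

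With the sign decoupled from the map, the real difficulty is one your sketch does not address: making the spine output monotone in the spatial coordinates while computing it, away from the unknown fixed point, without the sign. That is what the reflected copy does in the paper. The map $G_i(u,v)=(H_i(u),RH_i(Rv))$, with $Rz=(n+1)\one_d-z$, places every possible fixed point on the known level of coordinate sum $d(n+1)$, which is an antichain. The label can then be $-1$ below that level, $+1$ above it, and $b_i$ at the fixed point, needing $b_i$ only when the two source queries have hit $p_i$.

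A spine with one instance per position also fails in the spine direction. Take a spatial point above every possible fixed point of instance $t$ and below every possible fixed point of instance $t+1$; such points exist once the instances occupy increasing range boxes. There, sign-free monotone labels must be $+1$ and $-1$, so $t+\sigma_t$ drops when $t$ increases by one. Moreover, a $\pm1$ direction never vanishes, so the spine would have no fixed point. The paper lets stage $j$ hold the adjacent pair $j,j+1$ in two alternating $2d$-coordinate groups. Across a stage boundary the correction then comes from the retained instance, while the replaced one jumps to a higher box. Each stage is spread over three positions with corrections $b_j$, $(b_j+b_{j+1})/2$, $b_{j+1}$; the middle one vanishes exactly for $j=J$, giving the unique fixed point at $t=3J-1$. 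This, not two ``transmission gadgets,'' is the source of $4d+1=2d+2d+1$, and each target query then costs up to four source queries rather than one.
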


The dimensions \(2,9,37,149\) give logarithmic exponents \(2,3,4,5\),
respectively. Each dimension is fixed before \(N\) tends to infinity;
no uniformity in the constants \(c_r\) is needed. Consequently, increasing
the dimension cannot be accounted for by a multiplicative factor while
retaining a fixed finite power of \(\log N\).

\begin{corollary}[No dimension-independent exponent]\label{cor:no-exponent}
There is no finite real number \(a\) for which every fixed dimension \(d\)
satisfies \(Q_d(N)=O((\log N)^a)\), even when the implicit constant and
the asymptotic threshold are allowed to depend on \(d\).
\end{corollary}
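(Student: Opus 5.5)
The corollary follows directly from Theorem~\ref{thm:hierarchy} by contradiction, using \(U_d(N)\le Q_d(N)\). Suppose some finite real \(a\) makes every fixed dimension \(d\) satisfy \(Q_d(N)=O((\log N)^a)\). Concretely, for each \(d\) there are constants \(C_d\) and \(M_d\) with \(Q_d(N)\le C_d(\log N)^a\) for all \(N\ge M_d\).

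Choose an integer \(r\ge0\) with \(r+2>a\), for instance \(r=\max\{0,\lceil a\rceil\}\). Set \(d=d_r=(7\cdot4^r-1)/3\). This is a positive integer, since \(4^r\equiv1\pmod 3\). Theorem~\ref{thm:hierarchy} gives \(c_r>0\) and \(N_r\) such that, for all \(N\ge\max\{N_r,M_{d_r}\}\),
\[
c_r(\log N)^{r+2}\le U_{d_r}(N)\le Q_{d_r}(N)\le C_{d_r}(\log N)^a .
\]
Dividing by \((\log N)^a\), which is positive once \(N\ge2\), gives \(c_r(\log N)^{r+2-a}\le C_{d_r}\). The exponent \(r+2-a\) is strictly positive, so the left side tends to infinity as \(N\to\infty\). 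The right side is a constant, which is a contradiction.

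No step here is a real obstacle; all the substance lies in Theorem~\ref{thm:hierarchy}. Two details need checking. First, the constants in the assumed upper bound may depend on \(d\), but we fix a single dimension \(d_r\) before letting \(N\to\infty\), so that dependence is harmless. Second, the argument uses the trivial inequality \(U_d\le Q_d\): an algorithm that works for all monotone maps also works on the promise family where the fixed point is unique.
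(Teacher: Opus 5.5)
Your proposal is correct and follows essentially the same argument as the paper. The paper likewise fixes \(r\) with \(r+2>a\), combines \(Q_{d_r}\ge U_{d_r}\) with Theorem~\ref{thm:hierarchy}, and observes that \(Q_{d_r}(N)/(\log N)^a\ge c_r(\log N)^{r+2-a}\to\infty\) in the fixed dimension \(d_r\). The only difference is that you phrase this as a contradiction rather than directly.
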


The central ingredient is a dimension-amplification theorem. To state it,
we distinguish outputting a fixed point from actually querying it.
Let \(h_d(n)\) be the minimum worst-case query cost for monotone maps on
\([n]^d\) with a unique fixed point, when the algorithm is required to
query that point. An algorithm that outputs the point can meet this
requirement with one final query. Hence
\begin{equation}\label{eq:hitting}
U_d(n)\le h_d(n)\le U_d(n)+1.
\end{equation}
This distinction is useful because our construction makes a comparison
sign available only when a local fixed point is queried.

\begin{theorem}[Dimension amplification]\label{thm:amplification}
For integers \(d\ge1\), \(n\ge3\), and \(M\ge2\),
\begin{equation}\label{eq:amplification}
U_{4d+1}(Mn)\ge
\frac{h_d(n)}4\ceil{\log(M-1)}
\ge\frac{U_d(n)}4\ceil{\log(M-1)}.
\end{equation}
\end{theorem}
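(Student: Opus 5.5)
The plan is to prove the first inequality of \eqref{eq:amplification} by an explicit composition construction plus an adversary argument; the second inequality then follows from \eqref{eq:hitting}. Write the big grid $[Mn]^{4d+1}$ as a product of one \emph{selector} coordinate $t\in[Mn]$ and four blocks of $d$ coordinates. Each block coordinate in $[Mn]$ splits into a coarse part in $[M]$ and a fine part in $[n]$. The aim is to embed a one-dimensional binary-search problem on $[M-1]$ ``outside'' a $d$-dimensional hard instance $G$ on $[n]^d$. Locating the outer fixed point should cost $\ceil{\log(M-1)}$ rounds. Each round should give the algorithm a comparison sign (left or right of a hidden threshold $s\in[M-1]$) only when it queries the unique fixed point of a fresh copy of an inner hard instance. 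By the definition of $h_d(n)$, such a copy costs $h_d(n)$ queries to hit.

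The construction proceeds as follows.
\begin{enumerate}
\item Inside a coarse cell, the inner map acts like $G$ on the fine coordinates. Away from the inner fixed point it pushes outward toward the cell boundary, so the map is monotone and reveals nothing about $s$.
\item The four blocks play two roles: two copies, one increasing and one decreasing in a suitable sense, to stay monotone on both sides of the threshold, and a spare pair to carry the ``answer'' signal.
\item Only at the inner fixed point does the selector coordinate $t$ get pushed up or down according to whether the current coarse position lies below or above $s$.
\item Chaining the levels yields a unique global fixed point at coarse position $s$ with the inner fixed point there.
\end{enumerate}
One must check monotonicity across cell boundaries carefully; this is exactly why there are four blocks rather than two. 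The check amounts to verifying that each output coordinate is a composition of monotone pieces (max/min of clamped copies). Uniqueness follows by showing that any fixed point must have its coarse coordinates consistent with the threshold comparisons and its fine coordinates at the inner fixed point.

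For the lower bound, an adversary runs the standard binary-search adversary for $s$, keeping an interval of consistent thresholds. Each time the algorithm hits the current inner fixed point in the relevant cell, the adversary answers the comparison so as to halve the interval at most. Independently, it runs an adversary for $h_d(n)$ on each cell's inner instance, choosing the inner instance lazily so that any query not at an inner fixed point is answered consistently with many inner instances. One subtlety must be handled: a single query to the big grid projects to one query in each of the (up to four) blocks, so it can advance at most four inner adversaries. This gives the factor $1/4$.

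Summing over the rounds, the algorithm must complete about $\ceil{\log(M-1)}$ inner hits before $s$ is determined. Each hit costs $h_d(n)$ projected queries, and each real query is charged at most four projections. Hence the total is at least $h_d(n)\ceil{\log(M-1)}/4$.

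I expect the main obstacle to be the monotonicity-preserving gluing. The map must stay monotone on the whole grid, yet leak the threshold sign only at inner fixed points. The adversary bookkeeping must also confirm that off-fixed-point queries genuinely reveal nothing about $s$ or about the other cells' inner instances. I would try first to define the map coordinatewise via clamping $F(x)=\min(\max(\cdot))$ of monotone pieces, and to verify monotonicity piece by piece.
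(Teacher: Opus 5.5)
\textbf{There is a genuine gap: the monotone map is not constructed, and step~3 as stated fails.} Your lower-bound bookkeeping is the right one and matches the paper's concealed ordered-search lemma. It has a per-instance residual-cost adversary forcing $h_d(n)$ local queries before each inner fixed point is hit, and a threshold adversary keeping the larger half of the candidate set. It needs a common-completion check so the algorithm cannot stop, even with an unqueried output, while two thresholds remain. The factor $1/4$ comes from at most four local queries per target query, and the second inequality follows from \eqref{eq:hitting}. All of this presupposes a monotone map on $[Mn]^{4d+1}$ with these properties: a unique fixed point that encodes the threshold, and every value computable from at most four local queries, with the sign available only upon hitting an inner fixed point.

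\textbf{The first obstacle is hiding the sign monotonically.} Suppose the selector output at the inner fixed point $p$ is $t+b$ with hidden $b\in\{\pm1\}$, while outputs elsewhere are computed without $b$. Then monotonicity for both values of $b$ forces the output to be at most $t-1$ at every point strictly below $p$ and at least $t+1$ at every point strictly above $p$. A default that ignores where $p$ lies is therefore not monotone. Moreover, one value $H(x)$ does not in general reveal whether $x\preceq p$, $x\succeq p$, or neither. The missing idea is the conjugate pairing $G_i(u,v)=(H_i(u),RH_i(Rv))$ with $Rz=(n+1)\one_d-z$. Its unique fixed point $(p_i,Rp_i)$ always lies on the known level $\sum_k z_k=d(n+1)$, which is an antichain. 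Hence the following label is monotone: $-1$ below the level, $+1$ above it, $+1$ at non-fixed points on it, and $b_i$ at the fixed point. It costs only the two local queries $H_i(u)$ and $H_i(Rv)$. That is what a doubled block is for, not ``staying monotone on both sides of the threshold.''

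\textbf{The selector has a second, independent problem.} A single $\pm1$ correction per position never vanishes, so $t$ has no genuine fixed point. Also, switching directly from one sign to the next lowers the correction by $2$ when the signs are $+1$ then $-1$, which a unit increase of $t$ cannot absorb; so $t+\text{correction}$ decreases and monotonicity fails. The paper keeps two adjacent instances $j,j+1$ active simultaneously, with odd indices in one doubled block and even indices in the other; this is why there are four $d$-blocks. Stage $j$ gets three positions with corrections $a$, $(a+b)/2$, $b$, where $a,b$ are the two labels. The correction then vanishes only at the middle position of stage $J$, and every unit step of $t+\delta_t$ is nonnegative.

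Passing to stage $j+1$ keeps instance $j+1$ in place and replaces $j$ by $j+2$. Each $G_i$ is translated into the diagonal box $[(i-1)n+1,in]^{2d}$ and extended to the whole domain by clamping. This makes the replaced block's output increase coordinatewise for arbitrary inputs and arbitrary local maps. Your per-coordinate coarse/fine splitting with ``pushing outward toward the cell boundary'' does not say which instance is active or why switching instances is monotone. Without these three devices — conjugate pairing onto an antichain level, averaged adjacent signs, and ordered diagonal boxes — neither monotonicity nor a unique fixed point encoding $s$ is established, and the four-query accounting has nothing to apply to.
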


Here \(n\) is the side length of each local instance, and \(M\) is the
number of local instances used in the construction. Taking \(M=n\)
squares the side length and contributes a factor of order \(\log n\).
Starting with the planar lower bound and iterating gives
Theorem~\ref{thm:hierarchy}. The dimension increase and the factor of
four in the query simulation are explained below.

\subsection{Proof overview}

The amplification theorem combines a product lower bound for concealed
ordered search with a monotone representation of that search. The
representation's unique fixed point reveals the answer, and each full
function evaluation uses at most four auxiliary queries.

\paragraph{Making every comparison costly.}
Consider an unknown threshold \(J\in[M-1]\). At index \(i\), the
comparison sign is \(+1\) if \(i\le J\) and \(-1\) otherwise.
Instead of revealing this sign directly, attach it to the unique fixed
point of an independent map on \([n]^d\). A source query returns the
full value of one local map, together with the sign only if the query
hits its fixed point. Thus the maps hide \emph{where} comparisons can be
read, while the threshold determines \emph{what} those comparisons say.

A threshold adversary can force \(\lceil\log(M-1)\rceil\) local fixed
points to be queried. A separate adversary for each local map forces at
least \(h_d(n)\) queries before that map's fixed point is reached.
These statements must remain valid when an algorithm interleaves work
on many maps; assuming that it finishes one local search before starting
another would not suffice. Section~\ref{sec:search-lower} proves the
product bound by maintaining the remaining worst-case query cost of each
local problem and the set of thresholds consistent with all revealed signs.
This product bound also follows from Montanaro's deterministic
decision-tree composition theorem \cite{montanaro}.

\paragraph{Encoding a concealed sign monotonically.}
A fixed point of the final map should identify the threshold, but a query
away from the relevant local fixed points must be answerable without the
concealed signs. Moreover, the entire answer vector must be monotone.
To reconcile these requirements, pair a local map with its reflected copy.
If its fixed point is \(p\), the pair has fixed point
\((p,(n+1)\one_d-p)\), where \(\one_d=(1,\ldots,1)\).
All such points have the same coordinate sum and are pairwise
incomparable. We can therefore place either sign at the unknown fixed
point, assign \(-1\) below this known level and \(+1\) above it, and
use a fixed default sign at all other points on the level. This gives a
monotone label that needs the concealed sign only where the oracle
already reveals it.

\paragraph{Combining adjacent comparisons.}
Two adjacent signs locate the threshold: their average is positive
before the pair \(J,J+1\), zero at that pair, and negative afterwards.
We use two groups of \(2d\) coordinates to hold adjacent reflected
instances and one final coordinate to select the pair. Consecutive
pairs retain one instance and replace the other. Translating local
ranges into ordered boxes makes the spatial outputs increase at each
replacement. In the last coordinate, changing directly from one sign
to the next could violate monotonicity; inserting their average spreads
a possible drop of two over two unit steps. This explains both the
three positions per pair and the dimension \(2d+2d+1=4d+1\).

The resulting map is monotone and has one fixed point, whose last
coordinate identifies \(J\). Evaluating either reflected instance takes
at most two source queries, so the entire target value takes at most
four. The construction accepts any monotone local map with a unique
fixed point. Its output can therefore be
used as a local instance in the next application.

\subsection{Related work}

\paragraph{Upper bounds and the role of dimension.}
For each fixed dimension \(d\), Dang, Qi, and Ye \cite{dqy} gave the
recursive binary-search bound \(O((\log N)^d)\). Fearnley, P\'alv\"olgyi, and Savani \cite{fps}
showed that three dimensions can be handled in \(O((\log N)^2)\)
queries and gave a decomposition theorem for higher dimensions.
Chen and Li \cite{chenli} subsequently obtained
\[
O\bigl((\log N)^{\lceil(d-1)/2\rceil+1}\bigr)
\]
queries. Chen, Li, and Yannakakis \cite{cly2026} improved this to
\[
O\bigl((\log N)^{\lceil(d-1)/3\rceil+1}\bigr),
\]
in particular obtaining the quadratic-logarithmic bound in dimension
four. Haslebacher and Lill \cite{hl} gave a different algorithm with
the same bound in dimension three. These improvements explain why
nested binary search alone does not establish that the exponent must
grow with dimension.

\paragraph{Path constructions and search composition.}
Etessami et al.\ \cite{epry} proved an \(\Omega((\log N)^2)\)
randomized lower bound in dimension two using a monotone path
construction with a unique fixed point. This is our starting bound;
Appendix~\ref{app:planar} gives a self-contained deterministic proof
using that path map. Br\^anzei, Phillips, and Recker \cite{bpr}
extended the construction to multidimensional herringbone functions
and proved that, for \(d,N\ge2\), randomized algorithms need an expected
number of queries of at least
\(\Omega((d/\log d)(\log N)^2)\).
This bound allows any fixed success probability greater than one half.
Their deterministic \(O(d\log N\log(Nd))\)-query algorithm for that
family also shows that its complexity remains quadratic in \(\log N\)
for each fixed \(d\).
In the quantum query model, Li, Ma, Yang, and Zhao \cite{lmyz}
proved an \(\Omega(d\log N)\) lower bound on \([N]^d\).
For fixed \(d\), this bound is linear in \(\log N\).

There is a related composition principle in Phillips's work on quantum
Tarski queries \cite[Definition~4 and Theorem~2]{phillips}.
His generalized-search framework allows
answers to be concealed at solutions of other search problems, and he
represents nested ordered search by two-dimensional monotone maps with a
unique fixed point, obtaining a quadratic-logarithmic quantum lower bound.
His representation also has constant evaluation cost
\cite[Lemma~13]{phillips}. Here we give a monotone
representation that accepts arbitrary unique-fixed-point instances as its
local maps and increases their dimension from \(d\) to \(4d+1\), allowing
the logarithmic factor to be gained repeatedly.

\paragraph{Uniqueness and the scope of the reduction.}
Chen, Li, and Yannakakis \cite{clyunique} proved that general and unique
Tarski have exactly the same deterministic query complexity when an
algorithm must query a fixed point. Their reduction constructs answers
adaptively from the query history; the output convention used here differs
by at most one query, as in \eqref{eq:hitting}. Our construction instead
defines a fixed target map for each fixed auxiliary input. It is not a
standard reduction from unrestricted Tarski to unique Tarski: its source
oracle also carries comparison signs and promises that they arise from
one threshold. This distinction matters for the standard-reduction
obstruction in \cite{clyunique}. Appendix~\ref{app:reductions} explains
why compatible partial query records need not have a common completion
within our promised source domain.

\paragraph{Organization.}
Section~\ref{sec:model} specifies the model and query-complexity
conventions. Section~\ref{sec:search} derives the dimension-amplification
theorem from an auxiliary search lower bound and a monotone
representation. Section~\ref{sec:construction} constructs that
representation, and Section~\ref{sec:search-lower} proves the auxiliary
lower bound. Section~\ref{sec:consequences} iterates the construction to
obtain the lower-bound hierarchy, and Section~\ref{sec:discussion}
discusses the scope of the result.

\section{Model}\label{sec:model}

\subsection{Grids and input families}

\begin{definition}[Monotone fixed-point instances]\label{def:instances}
For integers \(d,N\ge1\), let \([N]=\{1,\ldots,N\}\). For
\(x,y\in[N]^d\), define
\[
x\preceq y\quad\Longleftrightarrow\quad
x_i\le y_i\ \text{for every }i\in[d].
\]
A map \(F:[N]^d\to[N]^d\) is \emph{monotone} if
\[
\forall x,y\in[N]^d,\qquad
x\preceq y\ \Longrightarrow\ F(x)\preceq F(y).
\]
Its set of fixed points is
\(\Fix(F)=\{x\in[N]^d:F(x)=x\}\). Define
\[
\begin{aligned}
\mathcal M_{d,N}
&=\{F:[N]^d\to[N]^d:F\text{ is monotone}\},\\
\HH_{d,N}
&=\{F\in\mathcal M_{d,N}:|\Fix(F)|=1\}.
\end{aligned}
\]
The general Tarski problem has input family \(\mathcal M_{d,N}\);
the unique Tarski problem has input family \(\HH_{d,N}\).
In either case, a valid output on input \(F\) is a point in \(\Fix(F)\).
\end{definition}

Tarski's theorem ensures that \(\Fix(F)\ne\varnothing\) for every
\(F\in\mathcal M_{d,N}\). Monotonicity, and uniqueness when applicable,
are input promises; the algorithm need not verify them. We use the
coordinatewise order on all product grids and write \(\one_k\) for
the all-ones vector in \(k\) coordinates.

\subsection{Oracle algorithms and success conventions}

\begin{definition}[Deterministic full-value query algorithm]\label{def:algorithm}
Fix \(d,N\ge1\) and a nonempty input family
\(\mathcal C\subseteq\mathcal M_{d,N}\), known to the algorithm.
The unknown input \(F\in\mathcal C\) is fixed throughout the execution.
A query at \(x\in[N]^d\) returns the entire vector \(F(x)\) and costs
one query, independently of \(d\).

A deterministic algorithm is a rule that, given \(d,N\), the family
\(\mathcal C\), and a transcript of \(t\ge0\) queries
\[
\mathcal T_t=((x_1,y_1),\ldots,(x_t,y_t)),\qquad y_s=F(x_s),
\]
either chooses a next query \(x_{t+1}\in[N]^d\) or terminates with an
output \(z\in[N]^d\). The initial transcript \(\mathcal T_0\) is empty.
There is no restriction on computation between queries, and the rule
has no access to \(F\) other than through its query replies.

Let \(\mathcal A_{\mathrm{out}}(\mathcal C)\) be the algorithms that,
for every \(F\in\mathcal C\), terminate after finitely many queries
and output a point of \(\Fix(F)\). The output need not have been queried.
Let \(\mathcal A_{\mathrm{hit}}(\mathcal C)\) be the subclass for which,
on every \(F\in\mathcal C\), the transcript at termination contains
some pair \((x_s,y_s)\) with \(y_s=x_s\). Thus the hitting convention
requires an actual query at a fixed point, even if its location can be
deduced earlier. For either class, let \(T_A(F)\) denote the total number
of queries made by algorithm \(A\) before termination on input \(F\).
\end{definition}

For an arbitrary transcript \(\mathcal T_t\), its consistent input family is
\[
\mathcal C(\mathcal T_t)
=\{F\in\mathcal C:F(x_s)=y_s\text{ for }1\le s\le t\}.
\]
An adversary may choose replies as queries arrive, but must maintain
\(\mathcal C(\mathcal T_t)\ne\varnothing\) after every reply. This
condition ensures that the complete finite record has a legal fixed
input realizing all its answers.

\subsection{Query complexities}

\begin{definition}[Worst-case deterministic query complexity]\label{def:complexities}
For \(d,N\ge1\), define
\[
\begin{aligned}
Q_d(N)
&=\min_{A\in\mathcal A_{\mathrm{out}}(\mathcal M_{d,N})}
  \max_{F\in\mathcal M_{d,N}} T_A(F),\\
U_d(N)
&=\min_{A\in\mathcal A_{\mathrm{out}}(\HH_{d,N})}
  \max_{F\in\HH_{d,N}} T_A(F),\\
h_d(N)
&=\min_{A\in\mathcal A_{\mathrm{hit}}(\HH_{d,N})}
  \max_{F\in\HH_{d,N}} T_A(F).
\end{aligned}
\]
\end{definition}

The input families are finite, and querying every grid point gives a
correct algorithm under either convention. Hence these minima are finite
and attained. Restricting the input family gives \(U_d(N)\le Q_d(N)\).
Every hitting algorithm is also an output algorithm, while an output
algorithm can query its returned point once more. This proves
\(U_d(N)\le h_d(N)\le U_d(N)+1\), as stated in \eqref{eq:hitting}.
Requiring a hitting algorithm to terminate at its first fixed-point query
does not change \(h_d(N)\): it can stop at that query and return the
queried point. Our final lower bounds concern \(U_d\) and hence also \(Q_d\).

The auxiliary oracle in Section~\ref{sec:aux-problem} returns an additional
sign when a local fixed point is queried. Deducing the point's location
need not reveal that sign, which is why the auxiliary lower bound uses
\(h_d\). This extra information is specific to the auxiliary problem;
the Tarski oracle returns only the function value.

All logarithms are to base two. In a bound stated for fixed \(d\), the
implicit constants and the side-length threshold in \(O\), \(\Omega\),
or \(\Theta\) may depend on \(d\), but not on \(N\). The same convention
applies to a fixed iteration count \(r\) in the lower-bound hierarchy.

\section{Amplifying the query lower bound}\label{sec:search}

We now show how combining local fixed-point searches produces a harder
search problem. A local instance is a monotone map on \([n]^d\) with a
unique fixed point; write \(M\) for the number of these instances. The
dimension-amplification theorem (Theorem~\ref{thm:amplification})
transfers their search cost to a single instance on \([Mn]^{4d+1}\):
\[
U_{4d+1}(Mn)\ge \frac{h_d(n)}4\ceil{\log(M-1)}.
\]
Here \(h_d(n)\) is the worst-case cost of actually querying one local
fixed point. Thus the construction multiplies this cost by a logarithmic
factor, up to the constant loss of four. Taking \(M=n\) will increase
the power of the logarithm in the lower bound.

We prove this inequality through an auxiliary threshold-search problem.
The concealed ordered-search lemma (Lemma~\ref{lem:search}) shows that
hiding each comparison at a local fixed point forces the local query
cost and the number of comparisons to multiply. The monotone
representation (Proposition~\ref{prop:embedding}) then transfers this
cost to Tarski: it gives a map with a unique fixed point that identifies
the threshold, and evaluates each full target value using at most four
auxiliary queries. We state these two results and combine them here;
their proofs occupy the next two sections.

\subsection{The auxiliary search problem}\label{sec:aux-problem}

The auxiliary problem is ordered search with costly access to
comparisons. An unknown threshold divides the indices into two groups.
At each index, the oracle returns the comparison with that threshold
only when the algorithm queries the unique fixed point of the local map
assigned to the index. The required output is \(J\). This extra comparison information belongs to
the auxiliary oracle, not to the Tarski oracle defined in
Section~\ref{sec:model}.

Fix \(d\ge1\), \(n\ge3\), and \(M\ge2\). Recall that \(\HH_{d,n}\) is the finite family
of monotone maps \([n]^d\to[n]^d\) with a unique fixed point
(Definition~\ref{def:instances}). The answer
to the search problem is an unknown threshold \(J\in[M-1]\). For each
index \(i\in[M]\), choose a map \(H_i\in\HH_{d,n}\), and write \(p_i\) for
its fixed point. Every combination of a threshold and a tuple of maps is
a legal input: the choice of \(J\) imposes no restriction on any \(H_i\).
Define the comparison signs by
\begin{equation}\label{eq:bits}
b_i=\begin{cases}+1,&i\le J,\\-1,&i>J.\end{cases}
\end{equation}
Thus the signs change from \(+1\) to \(-1\) between indices \(J\) and
\(J+1\), and the endpoint signs \(b_1=+1\), \(b_M=-1\) are known.
A sign is revealed only at the corresponding fixed point: the source oracle
answers a query \((i,x)\in[M]\times[n]^d\) by
\begin{equation}\label{eq:source-oracle}
\OO(i,x)=\left(H_i(x),
\begin{cases}b_i,&H_i(x)=x,\\ \bot,&H_i(x)\ne x,\end{cases}\right),
\end{equation}
where \(\bot\) means that no comparison sign is returned. The required
output is \(J\). Ordinary function values do
not constrain the threshold, because the maps can be chosen independently
of it. Let \(S_d(M,n)\) be the minimum worst-case number of source queries
used by a deterministic algorithm that determines \(J\).

Locating a threshold requires logarithmically many comparisons in the
worst case. Here, each new comparison is available only after a local
search reaches its fixed point. The next lemma proves that these costs
multiply even when the algorithm interleaves queries to different maps.

\begin{lemma}[Concealed ordered search]\label{lem:search}
\[
S_d(M,n)\ge h_d(n)\ceil{\log(M-1)}.
\]
\end{lemma}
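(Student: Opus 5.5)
We argue by an adversary that combines, for each index, an optimal adversary for the hitting problem $h_d(n)$ with a binary-search adversary for the threshold. Write $h=h_d(n)$. For each $i\in[M]$ the adversary runs its own copy of a local adversary against queries $(i,x)$. This local adversary answers consistently with some map in $\HH_{d,n}$. It guarantees that the first $h-1$ queries to index $i$ miss the fixed point, in the sense that after fewer than $h$ queries to index $i$ there remains a consistent $H_i\in\HH_{d,n}$ whose fixed point has not been queried. Such a strategy exists by the definition of $h_d(n)$ as a min–max: if no adversary could survive $h-1$ queries without being hit, a hitting algorithm with fewer than $h$ worst-case queries would exist. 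Since the inputs $(J,H_1,\dots,H_M)$ form a product family, the local adversaries act independently and never constrain $J$. The adversary answers $\bot$ as long as its local adversary can keep the fixed point unqueried. Once the algorithm queries a point that every consistent $H_i$ fixes, the local adversary must reveal it, which can happen only after at least $h$ queries to index $i$. Only then is $b_i$ revealed.

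The threshold part is a standard binary-search adversary on the set $I\subseteq[M-1]$ of thresholds consistent with the revealed signs. Initially $|I|=M-1$ and $I$ is an interval. When index $i$ first gets its sign revealed, the adversary chooses $b_i$ so that the surviving subinterval of $I$ has size at least $\lceil |I|/2\rceil$. Revealing $b_i=+1$ keeps $\{J\in I: J\ge i\}$ and $b_i=-1$ keeps $\{J\in I: J<i\}$, and these two sets partition $I$. Revealing an index outside the current interval's informative range keeps $I$ unchanged. Hence after $k$ reveals we have $|I|\ge (M-1)/2^k$. The algorithm can output $J$ correctly only when $|I|=1$, since otherwise two legal inputs share the transcript and have different answers. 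Therefore at least $\lceil\log(M-1)\rceil$ signs must be revealed, and these belong to distinct indices because each index carries one sign.

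Combining the two parts, each revealed index absorbed at least $h$ queries, and queries to distinct indices are distinct. The total is therefore at least $h\lceil\log(M-1)\rceil$, regardless of how the algorithm interleaves its queries.

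The main obstacle is making the local adversary rigorous in the adaptive, interleaved setting. We must check that the per-index adversaries remain jointly consistent, which follows from the product structure of the input family. We also must pin down the \emph{reveal moment}: a fixed point's location may become deducible before it is queried, but the sign is returned only on an actual query, so we must use $h_d$ rather than $U_d$. To obtain the local adversary from the min–max, we plan to use a potential argument. Define the remaining cost at a transcript as $\min_A\max_{F\text{ consistent}}$ of the additional queries needed to hit. At any transcript with remaining cost $c\ge1$, some reply to the next query keeps the remaining cost at least $c-1$. Moreover, a reply hitting the fixed point is forced only when the remaining cost is $1$. Starting from $c=h$, this gives the $h$ queries per index.
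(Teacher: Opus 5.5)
Your proposal is correct and follows the paper's proof. You use the same per-map residual-cost (Bellman-equation) adversary, which ensures that no fixed point is hit before the $h_d(n)$-th query to its own map, and the same halving adversary on the candidate thresholds, which forces first hits at $\lceil\log(M-1)\rceil$ distinct indices, with the product structure of the input family making every surviving threshold part of a legal input. The one detail you leave implicit, and which the paper states, is that after a map's fixed point is first queried the adversary fixes one consistent map and answers all later queries to that index from it, returning the same sign.
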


Lemma~\ref{lem:search} is a special case of Montanaro's composition
theorem \cite[Corollary~3]{montanaro}. The inner problem is to recover a
sign hidden at a local fixed point, which has query complexity \(h_d(n)\);
the outer problem is ordered search over \(M-1\) possible thresholds.
Section~\ref{sec:search-lower} gives a self-contained adversary proof in
the present oracle model, maintaining the work still required to reach
each local fixed point and the thresholds consistent with the revealed signs.
For \(M=2\), the threshold is already known and the bound is zero.

\subsection{Representing the search by a monotone map}

The auxiliary search lower bound must now be transferred to an ordinary
Tarski instance. We construct a monotone map whose unique fixed point
identifies the hidden threshold \(J\). Each target query must be
simulated with a constant number of source queries, so that the
simulation preserves the extra logarithmic factor. The next proposition
achieves this with at most four source queries per full target value.

\begin{proposition}[Monotone representation]\label{prop:embedding}
For fixed \(d\ge1\), \(n\ge3\), and \(M\ge2\), a single oracle procedure
evaluates a map
\[
F:[Mn]^{4d+1}\to[Mn]^{4d+1}
\]
using at most four queries to \(\OO\) for each target point. For every
auxiliary input represented by \(\OO\), this map is monotone and has a
unique fixed point whose last coordinate is \(3J-1\).
\end{proposition}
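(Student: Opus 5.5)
Following the overview, I build $F$ in three layers: a reflected pair for each local map, a monotone sign label on each pair, and a gluing of adjacent pairs along the last coordinate.

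\emph{Reflected pairs.} For $H_i\in\HH_{d,n}$ define $G_i:[n]^{2d}\to[n]^{2d}$ by
\[
G_i(x,y)=\bigl(H_i(x),(n+1)\one_d-H_i((n+1)\one_d-y)\bigr).
\]
This map is monotone. Its unique fixed point is $q_i=(p_i,(n+1)\one_d-p_i)$, whose coordinate sum is $L=d(n+1)$. On the level set $\{\text{sum}=L\}$, all points are pairwise incomparable. I define a label $\sigma_i(z)\in\{-1,+1\}$ by the following cases:
\begin{itemize}
\item $\sigma_i(z)=-1$ if the sum of $z$ is less than $L$;
\item $\sigma_i(z)=+1$ if the sum is greater than $L$;
\item $\sigma_i(q_i)=b_i$;
\item $\sigma_i(z)$ equals a fixed default on the rest of the level.
\end{itemize}
This label is monotone, because comparable distinct points have strictly ordered sums. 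Computing $G_i(z)$ and $\sigma_i(z)$ takes at most two source queries: query $(i,x)$ and $(i,(n+1)\one_d-y)$. We have $z=q_i$ exactly when both queries hit the fixed point, and in that case $b_i$ is returned.

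\emph{Gluing.} Write $[Mn]$ as $M$ consecutive blocks of length $n$, and write $z=(u,v,w)$ with $u,v\in[Mn]^{2d}$ and $w\in[Mn]$. The last coordinate $w$ selects a pair $(k,k+1)$ through three positions $3k-2,3k-1,3k$. The middle position $3k-1$ is the one where the average of $\sigma_k$ and $\sigma_{k+1}$ is read, and $J$ should sit there. The two groups hold instances $k$ and $k+1$ in alternating slots, say odd indices in $u$ and even indices in $v$. Each instance lives in its own block box, and moving $k\to k+1$ replaces one instance by the next, translated up by one block.

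For the spatial part, first clamp the relevant group to the active block box. Then output $G$ of the clamped point, shifted into the box; the other group is pushed toward the box of its retained instance. For the last coordinate, the output is $w+\delta$ with a clipped sign, where:
\begin{itemize}
\item at the middle positions, $\delta$ comes from $(\sigma_k+\sigma_{k+1})/2\in\{-1,0,1\}$;
\item at the boundary positions, $\delta$ comes from single signs;
\end{itemize}
so that consecutive labels never drop by more than one per unit step in $w$.

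Unique fixed point: at a fixed point, the spatial coordinates must equal the unique fixed points of the active instances. So the labels equal the true $b_k,b_{k+1}$. The $w$-dynamics then push upward while both signs are $+1$ and downward while both are $-1$, and stay at rest exactly at $w=3J-1$. Each target value needs $G$ and $\sigma$ for two instances, hence at most four source queries.

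\emph{Main obstacle.} The hardest step is verifying monotonicity of the full map across changes of $w$ and across clamping boundaries. I must check that:
\begin{itemize}
\item when $w$ increases, the spatial outputs weakly increase, since the replacing instance's box lies above the box of the one it replaces, and clamping is monotone;
\item the $w$-output never decreases, which is exactly why the averaged middle position is inserted so that a $+1\to-1$ transition becomes a drop of at most one per step;
\item the dependence of the $w$-output on $(u,v)$ through the labels is monotone, which follows from monotonicity of $\sigma_i$ composed with clamping.
\end{itemize}
I will carry this out as a case analysis over adjacent values of $w$ (within a triple, and between triples) combined with coordinatewise increases of $(u,v)$. Choosing the default label and the boundary conventions at $w=1$ and $w=Mn$ so that no spurious fixed points arise is the delicate part, and I will settle it while doing this case check.
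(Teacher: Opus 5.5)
Your plan follows the paper's construction: reflected pairs whose fixed points lie on a known antichain level, translated range boxes, odd and even instances in the two coordinate groups, and three positions per stage with the averaged label in the middle. Your label differs from the paper's but is also valid: you compose the local label with the clamp, whereas the paper uses the unclamped coordinate sum with an exact test $A_i(X)=X$, and whenever the clamped point equals $(p_i,(n+1)\one_d-p_i)$ your two source queries are made at $p_i$ and return $b_i$. The one piece you leave undefined is the tail $3(M-1)<w\le Mn$, which must reuse the last stage with correction $\sigma_M$ (a constant correction $-1$ there would break monotonicity at $w=3(M-1)$); that tail, not the default label on the level (which never affects fixed points), is where care is needed.
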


For a fixed auxiliary input, the target map is fixed as well: its value
at a point does not depend on the preceding target queries. The procedure
must return all \(4d+1\) output coordinates. Section~\ref{sec:construction} establishes these properties
by an explicit formula.

\subsection{Transferring the lower bound}

The search lower bound and the monotone representation together give
the desired amplification. The first charges the cost of a local
fixed-point search for each required threshold comparison. The second
turns a \(q\)-query target algorithm into an algorithm that determines
the threshold with at most \(4q\) source queries. Comparing these costs
proves the dimension-amplification theorem.

\begin{proof}[Proof of Theorem~\ref{thm:amplification}]
Suppose a deterministic algorithm solves every unique-fixed-point instance
on \([Mn]^{4d+1}\) using at most \(q\) queries. Given an auxiliary input,
simulate the algorithm using the monotone representation
(Proposition~\ref{prop:embedding}). The simulation uses at most \(4q\)
source queries. A returned fixed point with last coordinate \(t^*\)
determines \(J=(t^*+1)/3\), even if that point was not queried. The
concealed ordered-search bound (Lemma~\ref{lem:search}) gives
\[
4q\ge h_d(n)\ceil{\log(M-1)}.
\]
Minimizing over algorithms and using \eqref{eq:hitting} proves
\eqref{eq:amplification}.
\end{proof}

\section{Constructing the monotone map}\label{sec:construction}

We construct the monotone map needed to transfer the auxiliary search
lower bound to Tarski. Its unique fixed point must identify the hidden
threshold, while its value at any query point must be computable using
at most four source queries. We first prepare the local maps, then
combine adjacent instances, and finally verify these properties
(Proposition~\ref{prop:embedding}).

Recall the source input: a threshold \(J\in[M-1]\) and monotone maps
\(H_i:[n]^d\to[n]^d\), each with a unique fixed point \(p_i\).
The comparison sign \(b_i\) is \(+1\) for \(i\le J\) and \(-1\)
otherwise; the source oracle returns it only when \(p_i\) is queried.
Set \(L=Mn\). The target map will have domain \([L]^{4d+1}\), and
its fixed point will have last coordinate \(3J-1\).

We arrange that a target fixed point solves two local instances at
adjacent indices \(j,j+1\), where \(1\le j<M\). At their fixed points,
the average
\[
\frac{b_j+b_{j+1}}2
\]
is positive for \(j<J\), zero for \(j=J\), and negative for \(j>J\).
A final coordinate will use this test to identify the pair \(J,J+1\).
The difficulty is evaluating that coordinate away from local fixed points,
where the signs have not been revealed, while keeping the whole map
monotone. We resolve this locally with monotone labels and then combine
two labelled instances.

The notation records these successive transformations: \(H_i\) is the
original \(d\)-dimensional map, \(G_i\) its reflected pairing, \(A_i\)
its translation to the common spatial domain, and \(\sigma_i\) its sign
label. The corresponding fixed points will be denoted by \(p_i\),
\(z_i\), and \(\zeta_i\), respectively.

\subsection{Monotone labels for the local instances}\label{sec:label}

Each local map \(H_i\) carries a comparison sign \(b_i\), which the
source oracle returns only at its unique fixed point. To incorporate
this sign into a monotone target map, we transform the local map and
equip it with a monotone \emph{label} taking values in \(\{-1,+1\}\).
The label must equal \(b_i\) at the transformed fixed point and be
computable without \(b_i\) elsewhere. We also place the transformed
maps in ordered range boxes, so that switching between instances later
will not decrease their spatial outputs.

Simply changing a label at an unknown fixed point need not preserve
monotonicity: the labels at comparable points could force its value.
We avoid this problem by placing all possible fixed points on one known
level of the coordinate sum. Distinct points on this level are
incomparable; such a set is an \emph{antichain}. Assigning \(-1\) below
the level and \(+1\) above it leaves every label on the level free.
Reflection gives the required constant sum.
Define \(R:[n]^d\to[n]^d\) by
\[
Rz=(n+1)\one_d-z.
\]
This map reverses the coordinatewise order and satisfies \(R(Rz)=z\).

\begin{lemma}[Reflection]\label{lem:reflection}
The map
\begin{equation}\label{eq:reflection}
G_i(u,v)=\bigl(H_i(u),R H_i(Rv)\bigr),\qquad (u,v)\in[n]^{2d},
\end{equation}
is monotone. Its unique fixed point is \(z_i=(p_i,Rp_i)\), and
\[
\sum_{k=1}^{2d}(z_i)_k=d(n+1).
\]
\end{lemma}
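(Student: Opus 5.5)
The proof splits into three routine checks, handled coordinate block by coordinate block. The two facts used throughout are that \(R\) is an order-reversing involution on \([n]^d\), and that \(H_i\) is monotone with unique fixed point \(p_i\).

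\emph{Monotonicity.} Suppose \((u,v)\preceq(u',v')\) in \([n]^{2d}\), so \(u\preceq u'\) and \(v\preceq v'\).
\begin{itemize}
\item First block: monotonicity of \(H_i\) gives \(H_i(u)\preceq H_i(u')\) directly.
\item Second block: \(R\) reverses order, so \(Rv'\preceq Rv\). Then \(H_i(Rv')\preceq H_i(Rv)\) by monotonicity of \(H_i\). Applying \(R\) once more reverses the inequality again, giving \(RH_i(Rv)\preceq RH_i(Rv')\).
\end{itemize}
Since the order on \([n]^{2d}\) is coordinatewise, these two block inequalities together give \(G_i(u,v)\preceq G_i(u',v')\).

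\emph{Fixed points.} The equation \(G_i(u,v)=(u,v)\) splits into two separate equations:
\begin{itemize}
\item \(H_i(u)=u\), which forces \(u=p_i\) by uniqueness;
\item \(RH_i(Rv)=v\). Applying the involution \(R\) to both sides gives the equivalent equation \(H_i(Rv)=Rv\). This forces \(Rv=p_i\), that is, \(v=Rp_i\).
\end{itemize}
Conversely, \((p_i,Rp_i)\) is a fixed point, because \(RH_i(R(Rp_i))=RH_i(p_i)=Rp_i\). So \(z_i=(p_i,Rp_i)\) is the unique fixed point. Well-definedness is immediate, since \(R\) maps \([n]^d\) to itself.

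\emph{Coordinate sum.} Each coordinate satisfies \((p_i)_k+(Rp_i)_k=n+1\). Summing over the \(d\) coordinates gives \(d(n+1)\).

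There is no real obstacle here. The only point requiring care is to track the double order reversal in the second block, which is what makes the reflected copy monotone rather than antitone.
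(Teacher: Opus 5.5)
Your proposal is correct and follows the paper's proof essentially step for step. Both arguments use the double order reversal for the second block, split the fixed-point equation and apply \(R\) to reduce to uniqueness for \(H_i\), and pair \(p_i\) with \(Rp_i\) coordinatewise to get the sum; your explicit check that \((p_i,Rp_i)\) is actually fixed is a small detail the paper leaves implicit.
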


\begin{proof}
The first coordinate group is monotone. In the second, the two occurrences of
\(R\) reverse the order twice, so \(v\mapsto R H_i(Rv)\) is also monotone.
The fixed-point equations are \(H_i(u)=u\) and \(H_i(Rv)=Rv\). Uniqueness for
\(H_i\) forces \(u=p_i\) and \(Rv=p_i\). The coordinate-sum identity follows
from \(p_i+Rp_i=(n+1)\one_d\).
\end{proof}

Reflection places all possible fixed points on a known level, which
will let us define the labels. It does not yet order the outputs of
different instances. That ordering will be needed when the last coordinate
switches between local maps. We obtain it by translating their ranges
into disjoint, coordinatewise ordered boxes.
For any positive integer \(m\), write
\[
P_m(s)=\min\{m,\max\{1,s\}\}\qquad(s\in\mathbb Z),
\]
and apply \(P_m\) coordinatewise to vectors. This monotone projection lets
us extend each translated map to the common domain \([L]^{2d}\).
Put \(o_i=(i-1)n\) and define
\begin{equation}\label{eq:translated}
A_i(X)=o_i\one_{2d}+G_i\bigl(P_n(X-o_i\one_{2d})\bigr).
\end{equation}
Its range lies in the box
\[
B_i=[o_i+1,o_i+n]^{2d}.
\]
The map \(A_i\) is monotone and has the unique fixed point
\(\zeta_i=o_i\one_{2d}+z_i\). To see uniqueness, any fixed point must belong
to the range box \(B_i\); inside that box, the projection in
\eqref{eq:translated} does nothing.

The label is defined on this entire common domain. Its value depends on
the coordinate sum of the original input, measured relative to the known
sum at \(\zeta_i\). Write
\begin{equation}\label{eq:rank}
w_i(X)=\sum_{k=1}^{2d}X_k-\bigl(2do_i+d(n+1)\bigr),
\end{equation}
so \(w_i(X)\) is negative below the fixed-point level, zero on it,
and positive above it. This shifted coordinate sum is computed from \(X\) itself, not
from its clamped local coordinates; distinct comparable inputs therefore
have strictly different values of \(w_i\). We assign \(+1\) on the level \(w_i=0\),
except at the fixed point itself:
\begin{equation}\label{eq:label}
\sigma_i(X)=
\begin{cases}
-1,&w_i(X)<0,\\
+1,&w_i(X)>0,\\
b_i,&w_i(X)=0\text{ and }A_i(X)=X,\\
+1,&w_i(X)=0\text{ and }A_i(X)\ne X.
\end{cases}
\end{equation}

\begin{lemma}[Monotone labels]\label{lem:label}
The function \(\sigma_i:[L]^{2d}\to\{-1,+1\}\) is monotone and satisfies
\(\sigma_i(\zeta_i)=b_i\). The pair \((A_i(X),\sigma_i(X))\) can be evaluated
using at most two queries to \(\OO\).
\end{lemma}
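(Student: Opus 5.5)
The plan has three parts: monotonicity, the value at \(\zeta_i\), and the evaluation cost.

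\emph{Monotonicity.} Take \(X\preceq Y\) with \(X\ne Y\). Then \(\sum_k X_k<\sum_k Y_k\), so \(w_i(X)<w_i(Y)\) strictly. This is why the sum is computed from \(X\) itself rather than from its clamped local coordinates. If \(w_i(X)<0\), then \(\sigma_i(X)=-1\), and \(-1\le\sigma_i(Y)\) holds trivially. If \(w_i(X)\ge0\), then \(w_i(Y)>0\), so \(\sigma_i(Y)=+1\ge\sigma_i(X)\). In no case does the sign at the level-zero point \(X\) constrain \(\sigma_i(Y)\), and the case \(X=Y\) is trivial. Hence \(\sigma_i\) is monotone whatever values it takes on the level \(w_i=0\). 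In particular it is monotone for either choice of \(b_i\) at \(\zeta_i\).

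\emph{The value at \(\zeta_i\).} Lemma~\ref{lem:reflection} gives \(\sum_k(z_i)_k=d(n+1)\), so
\[
\sum_{k=1}^{2d}(\zeta_i)_k=2do_i+d(n+1),
\]
that is, \(w_i(\zeta_i)=0\). Also \(A_i(\zeta_i)=\zeta_i\), so the third case of \eqref{eq:label} applies and \(\sigma_i(\zeta_i)=b_i\).

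\emph{Evaluation.} Write \(X-o_i\one_{2d}\) after projection as \((u,v)=P_n(X-o_i\one_{2d})\in[n]^{2d}\). Query \(\OO(i,u)\) and \(\OO(i,Rv)\). These two replies give \(H_i(u)\) and \(H_i(Rv)\), hence \(G_i(u,v)\) and \(A_i(X)\) by \eqref{eq:translated}. The value \(w_i(X)\) is computed from \(X\) alone. If \(w_i(X)\ne0\), or if \(A_i(X)\ne X\), the label is determined without \(b_i\).

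It remains to treat \(w_i(X)=0\) with \(A_i(X)=X\). Then \(X\) lies in the range box \(B_i\), so the projection acts trivially and \(X=\zeta_i\) by uniqueness of the fixed point of \(A_i\). So \(u=p_i\) and \(Rv=p_i\). The reply to \(\OO(i,u)\) is then at a fixed point of \(H_i\), and by \eqref{eq:source-oracle} it carries the sign \(b_i\). Thus two queries always suffice.

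The only delicate point is that the label's third case is read off exactly when the oracle returns the sign. This is ensured because \(A_i(X)=X\) forces \(X=\zeta_i\), and \(\zeta_i\) is precisely the point at which the source oracle reveals \(b_i\).
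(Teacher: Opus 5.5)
Your proof is correct and takes essentially the same route as the paper. The strict increase of the unclamped sum makes the level \(w_i=0\) an antichain, Lemma~\ref{lem:reflection} puts \(\zeta_i\) on that level, and the two queries \(\OO(i,u)\), \(\OO(i,Rv)\) determine \(A_i(X)\) and return \(b_i\) exactly in the one case \(X=\zeta_i\) where the label needs it. Your explicit case split just spells out the paper's antichain remark, and the detour through the range box \(B_i\) is harmless but unneeded, since \(A_i(X)=X\) already forces \(X=\zeta_i\) by uniqueness.
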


\begin{proof}
If \(X\preceq Y\) and \(X\ne Y\), then \(\sum_k X_k<\sum_k Y_k\). Hence
distinct points on the level \(w_i=0\) are incomparable. Assigning either sign
to each point on this level, with \(-1\) below and \(+1\) above, gives a monotone
function. Lemma~\ref{lem:reflection} places \(\zeta_i\) on this level, proving
the asserted value at the fixed point.

To evaluate the map and its label, compute \((u,v)=P_n(X-o_i\one_{2d})\) and query
\(\OO(i,u)\) and \(\OO(i,Rv)\). The two full vector answers determine
\(G_i(u,v)\) and therefore \(A_i(X)\). The label needs a comparison sign only
when \(A_i(X)=X\) and \(w_i(X)=0\). In this case \(X=\zeta_i\), so
\(u=Rv=p_i\); the queries just made have already returned \(b_i\).
These answers suffice in every case of \eqref{eq:label}. If the two query
arguments coincide, only one source query is needed.
\end{proof}

The range boxes also ensure that changing from index \(i\) to \(i+2\)
increases every spatial output. If \(1\le i\le M-2\),
then, for every \(X,Y\in[L]^{2d}\) and every coordinate \(k\),
\begin{equation}\label{eq:ordered-boxes}
(A_i(X))_k\le in<(i+1)n+1\le(A_{i+2}(Y))_k.
\end{equation}
The comparison holds for arbitrary inputs and for any choices of the two
local maps. This uniform separation is what will allow one local map to
be replaced without comparing the maps themselves.

\subsection{Combining adjacent instances}

We now use two labelled local maps at a time to locate the hidden
threshold. At their fixed points, the average of two consecutive signs
is positive before the threshold, zero at it, and negative afterwards.
A final coordinate will use this average to select the correct pair.
The challenge is to switch from one adjacent pair to the next without
breaking monotonicity; we do so by retaining their shared instance and
replacing only its neighbor.

Each paired instance has \(2d\) coordinates. Use two groups of this size
and one final coordinate, writing a point of \([L]^{4d+1}\) as
\((X,Y,t)\), with \(X,Y\in[L]^{2d}\) and \(t\in[L]\).
Odd indices use the \(X\) group and even indices use the \(Y\) group.
The pairs \(1,2\), then \(2,3\), then \(3,4\) can consequently share these
two groups, replacing just one map between consecutive pairs. In terms
of the indices assigned to \((X,Y)\), the first transitions are
\((1,2)\), \((3,2)\), and \((3,4)\). The retained index carries the
same label across the transition, while the replaced index increases
by two.

The coordinate \(t\) chooses the pair. We call the positions assigned
to indices \(j,j+1\) \emph{stage \(j\)}. A stage is a part of the input
grid, not a round of an algorithm: queries may visit stages in any order.
The construction is therefore defined simultaneously on the whole grid.
For each stage \(j\in[M-1]\), let
\[
\alpha_j=\begin{cases}j,&j\text{ odd},\\j+1,&j\text{ even},\end{cases}
\qquad
\beta_j=\begin{cases}j+1,&j\text{ odd},\\j,&j\text{ even}.
\end{cases}
\]
Thus \(\alpha_j\) is the odd member and \(\beta_j\) the even member of
\(\{j,j+1\}\). Define, for every index \(i\),
\[
\lambda_i(X,Y)=
\begin{cases}\sigma_i(X),&i\text{ odd},\\\sigma_i(Y),&i\text{ even}.
\end{cases}
\]

Fix the spatial input \((X,Y)\) for a moment, and let
\(a=\lambda_j(X,Y)\), \(b=\lambda_{j+1}(X,Y)\) be the two labels in stage
\(j\). The last output will be \(t\) plus a correction, clamped to
\([1,L]\). Within stage \(j\), this correction passes from \(a\) to \(b\).
A direct switch could lower the correction from \(+1\) to \(-1\), more
than a unit increase in \(t\) can absorb. Inserting their average gives
three positions:
\[
\begin{array}{c|ccc}
t&3j-2&3j-1&3j\\[2pt]
\hline
\text{last output before clamping}&t+a&t+(a+b)/2&t+b
\end{array}.
\]
For example, when \(a=+1\) and \(b=-1\), all three displayed outputs
are \(3j-1\). The correction decreases, but the last output does not;
only the middle input has zero correction.
At the next stage, when \(j<M-1\), the correction starts with the label
of \(j+1\). That map stays in place, so replacing \(j\) by \(j+2\) does
not change the correction at the boundary. Inequality~\eqref{eq:ordered-boxes}
ensures that the corresponding spatial output increases.

To give the definition on all of \([L]\), put \(T=3(M-1)\). Since \(n\ge3\),
we have \(T<L\). For \(t\in[L]\), set
\[
\tau_t=\min\{t,T\},\qquad
j_t=\ceil{\tau_t/3},\qquad r_t=\tau_t-3(j_t-1).
\]
Here \(j_t\in[M-1]\) selects the stage and \(r_t\in\{1,2,3\}\) selects
one of its three corrections. Define
\begin{equation}\label{eq:delta}
\delta_t(X,Y)=
\begin{cases}
\lambda_{j_t}(X,Y),&r_t=1,\\
\bigl(\lambda_{j_t}(X,Y)+\lambda_{j_t+1}(X,Y)\bigr)/2,&r_t=2,\\
\lambda_{j_t+1}(X,Y),&r_t=3.
\end{cases}
\end{equation}
The value is always an integer in \(\{-1,0,1\}\). For \(t>T\), these
definitions keep the last stage and give \(\delta_t=\lambda_M\).
The combined map is
\begin{equation}\label{eq:F}
F(X,Y,t)=\bigl(
A_{\alpha_{j_t}}(X),\ A_{\beta_{j_t}}(Y),\ P_L(t+\delta_t(X,Y))
\bigr).
\end{equation}
The actual coordinate \(t\) occurs in the last component; its truncated value
\(\tau_t\) only chooses the stage. Thus the tail reuses the final local
pair while retaining the actual coordinate \(t\) in the last output. Equations
\eqref{eq:translated}, \eqref{eq:label}, and \eqref{eq:F} define a single
map for each auxiliary input, independently of any query transcript.

\subsection{The unique fixed point}

We first verify that a fixed point of the combined map identifies the
hidden threshold \(J\), and that there is exactly one such point. The
two spatial coordinate groups must be fixed by their active local maps,
so the corresponding labels equal the true comparison signs. The last
coordinate can then be fixed only at the middle position of the unique
adjacent pair whose signs cancel. We also check the boundary projection,
which could otherwise create a fixed point with nonzero correction.

\begin{lemma}[Unique fixed point]\label{lem:F-root}
The map \(F\) has exactly one fixed point. Its last coordinate is \(3J-1\).
\end{lemma}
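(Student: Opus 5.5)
The plan is to characterize fixed points of \(F\) directly from \eqref{eq:F}: a point \((X,Y,t)\) is fixed if and only if three conditions hold simultaneously. First, \(X=A_{\alpha_{j_t}}(X)\). Second, \(Y=A_{\beta_{j_t}}(Y)\). Third, \(t=P_L(t+\delta_t(X,Y))\).

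\emph{Spatial coordinates.} By the uniqueness statement for the translated maps \(A_i\), the first two equations force \(X=\zeta_{\alpha_{j_t}}\) and \(Y=\zeta_{\beta_{j_t}}\). Lemma~\ref{lem:label} then gives \(\sigma_{\alpha_{j_t}}(X)=b_{\alpha_{j_t}}\) and \(\sigma_{\beta_{j_t}}(Y)=b_{\beta_{j_t}}\). Since \(\{\alpha_j,\beta_j\}=\{j,j+1\}\) with \(\alpha_j\) odd, the definition of \(\lambda\) yields \(\lambda_{j_t}(X,Y)=b_{j_t}\) and \(\lambda_{j_t+1}(X,Y)=b_{j_t+1}\). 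Consequently \(\delta_t\) is one of \(b_{j}\), \((b_j+b_{j+1})/2\), or \(b_{j+1}\) with \(j=j_t\), according to \(r_t\) (or \(\lambda_M=b_M=-1\) when \(t>T\)).

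\emph{Last coordinate.} The equation \(t=P_L(t+\delta)\) holds either when \(\delta=0\), or when clamping absorbs a nonzero \(\delta\). Clamping absorbs \(\delta\) only if \(t=1\) with \(\delta=-1\), or \(t=L\) with \(\delta=+1\). This boundary check is the main care point.
\begin{itemize}
\item At \(t=1\) we have \(j_t=1\) and \(r_t=1\), so \(\delta=b_1=+1\), which moves \(t\) to \(2\).
\item At \(t=L>T\) (using \(n\ge3\)) we have \(\delta=\lambda_M=b_M=-1\), giving \(L-1\).
\end{itemize}
So no boundary fixed point exists, and we need \(\delta_t=0\) exactly.

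Since each \(b_i\in\{\pm1\}\), the value \(\delta_t\) is zero only when \(r_t=2\) (so \(t\le T\), and \(t=\tau_t\)) and \(b_{j}+b_{j+1}=0\). By \eqref{eq:bits} this happens iff \(j=J\). Moreover, for \(t>T\) the value \(\delta=-1\) is nonzero, so such \(t\) are excluded. Hence \(t=3J-1\), and this determines \((X,Y)=(\zeta_{\alpha_J},\zeta_{\beta_J})\) uniquely.

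\emph{Existence.} Conversely, plugging \((\zeta_{\alpha_J},\zeta_{\beta_J},3J-1)\) into \(F\) reverses these steps. The stage is \(J\) with \(r=2\), the spatial parts are fixed, and \(\delta=0\), so \(P_L(3J-1)=3J-1\) because \(3J-1\in[L]\). This shows existence, and together with the previous step gives exactly one fixed point, whose last coordinate is \(3J-1\).
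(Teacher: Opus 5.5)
Your proof is correct and follows essentially the same route as the paper's. The spatial equations force \(X=\zeta_{\alpha_{j_t}}\) and \(Y=\zeta_{\beta_{j_t}}\), which turns the labels into the true signs; the checks at \(t=1\) and \(t=L\) rule out fixed points produced by \(P_L\); and \(\delta_t=0\) singles out the middle position of stage \(J\), with your explicit description of when \(P_L\) can absorb a nonzero correction being a slightly more careful version of the paper's endpoint check.
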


\begin{proof}
For a fixed \(t\), the first \(4d\) fixed-point equations force
\begin{equation}\label{eq:spatial-roots}
X=\zeta_{\alpha_{j_t}},\qquad Y=\zeta_{\beta_{j_t}}.
\end{equation}
At these coordinates, the two labels in use are \(b_{j_t}\) and
\(b_{j_t+1}\).

For \(1<t<L\), the last coordinate is fixed precisely when \(\delta_t=0\).
The first and third positions of each stage have a correction equal to a single
sign, so this cannot happen there. The tail \(t>T\) has correction
\(b_M=-1\). At the middle position of stage \(j\), the correction vanishes
precisely when \(b_j+b_{j+1}=0\), which by \eqref{eq:bits} holds exactly for
\(j=J\).

At the two endpoints the projection requires a separate check. At \(t=1\)
the correction on \eqref{eq:spatial-roots} is \(b_1=+1\), so the last output
is 2. At \(t=L\), it is \(b_M=-1\), so the last output is \(L-1\).
Neither endpoint is fixed.

Finally \(2\le3J-1\le3M-4<L\). Taking \(t=3J-1\) and the coordinates
in \eqref{eq:spatial-roots} satisfies every fixed-point equation. These
coordinates are forced, proving uniqueness.
\end{proof}

\subsection{Monotonicity}

Having identified the unique fixed point, we must verify that the
combined map preserves coordinatewise order on the entire grid. We
check spatial inputs while holding the pair-selecting coordinate \(t\)
fixed, and then check \(t\) while holding the spatial inputs fixed.
Monotone local maps and labels handle the first check. Ordered range
boxes and the interpolation through a label, the average, and the next
label handle the second. Unlike uniqueness, monotonicity does not require
the signs to come from a single threshold.

\begin{lemma}[Monotonicity of the target map]\label{lem:F-monotone}
The map \(F:[L]^{4d+1}\to[L]^{4d+1}\) is monotone.
\end{lemma}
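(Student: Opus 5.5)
Since the coordinatewise order on \([L]^{4d+1}\) is generated by unit increases in a single coordinate, it suffices to show that \(F(X,Y,t)\preceq F(X',Y',t')\) in two cases. The first case is \((X,Y)\preceq(X',Y')\) with \(t=t'\). The second is \((X,Y)=(X',Y')\) with \(t'=t+1\). Chaining these two kinds of steps gives the general case by transitivity.

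\textbf{Spatial step (fixed \(t\)).} With \(t\) fixed, the stage \(j_t\) and position \(r_t\) are fixed as well. The spatial outputs \(A_{\alpha_{j_t}}(X)\) and \(A_{\beta_{j_t}}(Y)\) are monotone, because each \(A_i\) is a composition of the monotone projection \(P_n\), the monotone map \(G_i\) of Lemma~\ref{lem:reflection}, and a translation. Each label \(\lambda_i\) is \(\sigma_i\) applied to \(X\) or to \(Y\), so it is monotone by Lemma~\ref{lem:label}. Hence \(\delta_t\) is monotone in \((X,Y)\): it is a single label or the average of two labels. Since \(P_L\) is monotone, the last output is monotone too.

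\textbf{Step in \(t\) (fixed \((X,Y)\)).} First suppose \(t\ge T\). Then \(\tau_t=\tau_{t+1}=T\), so the spatial outputs do not change, and \(\delta\) stays equal to \(\lambda_M\). The last output rises from \(P_L(t+\lambda_M)\) to \(P_L(t+1+\lambda_M)\).

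Now suppose \(t<T\) and \(t\) and \(t+1\) lie in the same stage \(j\). The spatial outputs are unchanged. The correction moves from \(a\) to \((a+b)/2\), or from \((a+b)/2\) to \(b\), so it drops by at most \(1\). Therefore \(t+\delta\) does not decrease, and \(P_L\) preserves this.

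The remaining case is \(t=3j\) and \(t+1=3j+1\) with \(j<M-1\). This stage transition is the step I expect to need the most care. The correction goes from \(\lambda_{j+1}\) at position \(r=3\) of stage \(j\) to \(\lambda_{j+1}\) at position \(r=1\) of stage \(j+1\). The label is the same function evaluated on the same group, because \(j+1\) keeps its parity assignment. So \(t+\delta\) increases by exactly one.

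For the spatial groups, I would check that \(\{\alpha_{j+1},\beta_{j+1}\}\) is obtained from \(\{\alpha_j,\beta_j\}\) by replacing \(j\) with \(j+2\), in the same group. This holds because \(j\) and \(j+2\) have the same parity. The retained group's output is unchanged. The replaced group's output goes from \(A_j(\cdot)\) to \(A_{j+2}(\cdot)\), which increases every coordinate by \eqref{eq:ordered-boxes}. Here \(1\le j\le M-2\), so that inequality applies, and it holds for arbitrary inputs.

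The bookkeeping to verify is the parity identity for \(\alpha_{j+1},\beta_{j+1}\), split into \(j\) odd and \(j\) even. I also need to confirm that nothing in the definition of \(\sigma_i\) depends on \(t\), so that the shared label really is identical across the boundary. Given both, all cases are covered and \(F\) is monotone.
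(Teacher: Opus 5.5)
Your proposal is correct and follows essentially the same route as the paper: you prove monotonicity in the spatial coordinates at fixed $t$, then handle unit steps in $t$ in three cases. Within a stage the correction drops by at most $(b-a)/2\ge -1$; at a stage transition the shared label $\lambda_{j+1}$ is unchanged while the replaced group rises by \eqref{eq:ordered-boxes}; and on the tail $t\ge T$ nothing but $t$ changes. You then chain the two kinds of steps by transitivity, and your explicit case split and parity bookkeeping just spell out details the paper states more briefly.
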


\begin{proof}
Fix \(t\). Each of the first two output groups is a monotone map. Each
\(\lambda_i\) is monotone in \((X,Y)\), as is the average of two such labels.
The last projection \(P_L\) is monotone as well. Thus \(F\) is monotone in the
spatial coordinates with \(t\) fixed.

Now fix \(X,Y\) and increase \(t\) by one. Within a stage the first two output
groups are unchanged. Between stages, one index changes from \(j\) to \(j+2\),
so its output increases coordinatewise by \eqref{eq:ordered-boxes}; the other
group stays unchanged. Both groups are constant for \(t\ge T\).

For the last output, put \(a=\lambda_j(X,Y)\) and \(b=\lambda_{j+1}(X,Y)\)
within a stage. The successive changes in its correction are
\[
\frac{a+b}{2}-a=\frac{b-a}{2},\qquad
b-\frac{a+b}{2}=\frac{b-a}{2}.
\]
Since \(a,b\in\{-1,+1\}\), each change is at least \(-1\). It follows that
\[
(t+1+\delta_{t+1})-(t+\delta_t)=1+\delta_{t+1}-\delta_t\ge0.
\]
Between consecutive stages, the correction equals \(\lambda_{j+1}(X,Y)\)
on both sides of the transition. It is also unchanged for \(t\ge T\).
Thus \(t+\delta_t\), and hence \(P_L(t+\delta_t)\), is nondecreasing in \(t\).

For two comparable full inputs, first increase their spatial coordinates at
fixed \(t\), then increase \(t\) at fixed spatial coordinates. The two
monotonicity statements give the required comparison.
\end{proof}

\subsection{Simulating a query}

The final requirement is to evaluate the entire target vector without
knowing the concealed signs in advance. A target query uses only two
local maps. For each one, the same at most two source queries determine
both its transformed function value and its label, as shown in the
monotone-label lemma (Lemma~\ref{lem:label}). Thus one target query
costs at most four source queries, independently of the dimension and
the number of local instances.

\begin{lemma}[Four-query simulation]\label{lem:simulation}
One full query to \(F\) can be simulated with at most four queries to the
source oracle \(\OO\).
\end{lemma}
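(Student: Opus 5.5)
The plan is to read off the evaluation procedure directly from the definition \eqref{eq:F} and reduce everything to Lemma~\ref{lem:label}. Given a target point \((X,Y,t)\in[L]^{4d+1}\), first compute, without any queries, \(\tau_t\), the stage \(j=j_t\), the position \(r_t\), and the two active indices \(\alpha_j\) and \(\beta_j\). These depend only on \(t\), \(M\), and \(n\), which the simulator knows. The point to record is that \(\{\alpha_j,\beta_j\}=\{j,j+1\}\), with \(\alpha_j\) odd and \(\beta_j\) even. Consequently the labels needed in \eqref{eq:delta}, namely \(\lambda_j\) and \(\lambda_{j+1}\), are exactly \(\sigma_{\alpha_j}(X)\) and \(\sigma_{\beta_j}(Y)\). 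No other index's label enters, whichever of the three cases for \(r_t\) applies. This also covers the tail \(t>T\), where \(j=M-1\).

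Next, apply Lemma~\ref{lem:label} to the index \(\alpha_j\) at the input \(X\). With at most two queries to \(\OO\), namely \(\OO(\alpha_j,u)\) and \(\OO(\alpha_j,Rv)\) for \((u,v)=P_n(X-o_{\alpha_j}\one_{2d})\), this yields the pair \((A_{\alpha_j}(X),\sigma_{\alpha_j}(X))\). Applying the same lemma to \(\beta_j\) at \(Y\) gives \((A_{\beta_j}(Y),\sigma_{\beta_j}(Y))\) with at most two further queries. These four answers determine the first \(2d\) and next \(2d\) output coordinates. They also determine \(\delta_t(X,Y)\) through \eqref{eq:delta}, and hence the last coordinate \(P_L(t+\delta_t(X,Y))\). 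The total is at most four source queries.

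The only step needing care is the claim that the label is computable from the answers in every case of \eqref{eq:label}. In particular, when \(w_i(X)=0\) and \(A_i(X)=X\), the concealed sign \(b_i\) must already have been returned. This is exactly what Lemma~\ref{lem:label} asserts: in that case \(X=\zeta_i\), so \(u=Rv=p_i\) and the oracle reply carries \(b_i\). I would therefore simply invoke the lemma rather than reprove it.

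Finally, I would remark that the procedure is non-adaptive across target queries. It depends only on the current point, so it evaluates one fixed map \(F\) for each auxiliary input, as required by Proposition~\ref{prop:embedding}.
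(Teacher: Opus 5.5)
Your proposal is correct and follows the paper's proof: you read off the stage and the active indices \(\alpha_{j_t},\beta_{j_t}\) from \(t\), then invoke Lemma~\ref{lem:label} twice, with at most two source queries each, to obtain both translated values and both labels, which determine every coordinate of \eqref{eq:F}. Your check that \(\{\lambda_{j},\lambda_{j+1}\}=\{\sigma_{\alpha_j}(X),\sigma_{\beta_j}(Y)\}\) in every case of \eqref{eq:delta}, including the tail \(t>T\), makes explicit what the paper's proof leaves implicit.
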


\begin{proof}
The query's last coordinate determines the two indices and which correction
formula to use. Apply
Lemma~\ref{lem:label} to evaluate
\((A_{\alpha_{j_t}}(X),\sigma_{\alpha_{j_t}}(X))\) and
\((A_{\beta_{j_t}}(Y),\sigma_{\beta_{j_t}}(Y))\), using two source queries
for each pair. These values determine the full vector in \eqref{eq:F}.
\end{proof}

The unique fixed point in Lemma~\ref{lem:F-root}, monotonicity in
Lemma~\ref{lem:F-monotone}, and the four-query simulation prove
Proposition~\ref{prop:embedding}.

\section{The cost of the auxiliary search}\label{sec:search-lower}

The preceding construction transfers an auxiliary search lower bound
to Tarski. We now prove that lower bound. Recall that an unknown
threshold \(J\) defines the signs \(b_i\): positive for \(i\le J\)
and negative otherwise. A source query returns the full value
\(H_i(x)\), and also returns \(b_i\) when \(H_i(x)=x\). If
\(S_d(M,n)\) is the cost of determining the threshold and \(h_d(n)\)
is the cost of querying one local fixed point, the concealed
ordered-search lemma (Lemma~\ref{lem:search}) asserts
\[
S_d(M,n)\ge h_d(n)\ceil{\log(M-1)}.
\]
The proof uses only this source oracle, not the target map from
Section~\ref{sec:construction}.

It is not enough simply to multiply the cost of one hard local map by
the depth of binary search: an algorithm may distribute queries among
many maps before revealing any sign. Instead, we maintain a worst-case
residual cost separately for each map. The local adversaries control
when signs can be revealed; a separate threshold adversary controls how
much each revealed sign helps. Finally, we verify that all answers admit
a common legal input.

\begin{proof}[Proof of Lemma~\ref{lem:search}]
\emph{Local query cost.}
We first describe an adversary for a single map. A transcript \(\tau\) records the
queries made to that map and their full vector answers. Let \(\HH(\tau)\) be the
nonempty subfamily of \(\HH_{d,n}\) consistent with those answers. Until a fixed
point has been queried, let \(r(\tau)\) be the smallest worst-case number of further
queries needed to query the fixed point of an unknown member of \(\HH(\tau)\).
After a fixed point is queried, put \(r=0\).

These numbers are finite, since the domain is finite. Considering the first query in an optimal decision tree gives, at every transcript in which no fixed point has been queried,
\begin{equation}\label{eq:bellman}
r(\tau)=1+\min_x\max_{y\text{ consistent with }\tau}
r(\tau;x,y).
\end{equation}
Here \(r(\tau;x,y)\) is the residual cost after appending query \(x\) and
answer \(y\); it is zero when \(y=x\). Thus, for every choice of the next query \(x\),
the adversary can choose a consistent answer \(y\) such that
\begin{equation}\label{eq:slow}
r(\tau;x,y)\ge r(\tau)-1.
\end{equation}
Starting from \(r(\varnothing)=h_d(n)\), it follows that the first query to the
fixed point occurs no earlier than the \(h_d(n)\)-th query to this map. Even if
the fixed point can already be deduced from the transcript, the task of querying
it still requires one query.

Run this adversary separately for each \(H_i\) until its fixed point is queried.
The algorithm may switch freely between maps. Inequality~\eqref{eq:slow} holds
for every next query, including one selected using all the other transcripts,
so the cost of \(h_d(n)\) applies to every map whose fixed point is reached.
After that first visit, fix any map consistent with its transcript and use it
for subsequent answers.

\emph{Threshold uncertainty.}
To answer the comparison signs, maintain a set \(K\subseteq[M-1]\) of
possible thresholds, initially \(K=[M-1]\). When the fixed point of \(H_i\) is queried
for the first time, partition \(K\) into
\[
K_+=\{J\in K:J\ge i\},\qquad K_-=\{J\in K:J<i\}.
\]
Keep a larger part and return its sign. If the sign was already determined, one
part is empty and the candidate set is unchanged. Repeated visits return the
same sign. After fixed points of \(s\) distinct maps have been queried,
\begin{equation}\label{eq:candidates}
|K|\ge\ceil{\frac{M-1}{2^s}}.
\end{equation}

\emph{Consistency and total cost.}
Every remaining threshold can be combined with any choice of locally consistent
maps to produce a legal input with this transcript. Each local family is nonempty
by construction. At queries that did not reach a fixed point, the answer
\(\bot\) places no restriction on the threshold. At queries that did reach one,
the returned sign agrees with every \(J\in K\). Thus choosing one map from each
local family and attaching the signs of any \(J\in K\) reproduces all recorded
answers.

If \(|K|\ge2\), two legal inputs with different answers \(J\) have the same
transcript. A correct deterministic algorithm cannot stop. Consequently it must
reach fixed points of at least \(\ceil{\log(M-1)}\) distinct maps. Each costs at
least \(h_d(n)\) queries to its own oracle, and these sets of queries are
disjoint. Multiplying the two bounds proves the result. For \(M=2\) the stated lower bound is
zero and the conclusion is immediate.
\end{proof}

\section{Iterating the amplification theorem}\label{sec:consequences}

The amplification construction accepts arbitrary monotone local maps
with unique fixed points and produces another map with the same
properties. We can therefore apply it again to its own output family.
Starting from a quadratic-logarithmic lower bound in dimension two,
each application increases the logarithmic exponent by one. After
\(r\) applications, this gives the lower-bound hierarchy of
Theorem~\ref{thm:hierarchy}: for every fixed \(r\ge0\), dimension
\(d_r=(7\cdot4^r-1)/3\) has a lower bound of
\(\Omega((\log N)^{r+2})\), whose implicit constant may depend on \(r\).

For the induction step, take \(M=n\) in the dimension-amplification
theorem. This replaces dimension \(d\) by \(4d+1\) and side length
\(n\) by \(n^2\). Squaring the side length only doubles its logarithm,
so it changes the multiplicative constant, not the logarithmic exponent.
The planar bound supplies the base case, and a padding lemma extends
the result from square side lengths to every sufficiently large integer.

\begin{lemma}[Planar lower bound \cite{epry}]\label{lem:planar}
There are constants \(c_0>0\) and \(N_0\ge3\) such that
\(U_2(N)\ge c_0(\log N)^2\) for every integer \(N\ge N_0\).
\end{lemma}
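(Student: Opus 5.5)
The plan is to give a self-contained deterministic adversary argument using the Etessami et al.\ two-dimensional path construction, in the form announced for Appendix~\ref{app:planar}. The underlying idea is nested binary search. The outer search is over the row (first coordinate) containing the fixed point. The inner search, inside each queried row, is over the position where a monotone path crosses that row. Each outer comparison can only be resolved after an inner search of length $\Omega(\log N)$, which gives $\Omega((\log N)^2)$ in total.

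\textbf{The hard family.} First I fix the family of maps. For a monotone lattice path $P$ from $(1,1)$ to $(N,N)$ and a point $p$ on $P$, define $F_{P,p}(x)$ by cases:
\begin{itemize}
\item for $x$ off $P$, move $x$ one step towards $P$: strictly below-right points go up-left and strictly above-left points go down-right (suitably clamped);
\item for $x$ on $P$ before $p$, move one step forward along $P$;
\item for $x$ on $P$ after $p$, move one step backward along $P$.
\end{itemize}
I then verify monotonicity by a local case check on the two neighbors in each coordinate direction; this is the standard check from \cite{epry}. I also verify that $p$ is the unique fixed point, since every other point has $F(x)\ne x$ by construction.

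\textbf{The adversary.} The adversary maintains an interval of candidate rows for $p$, together with, for each row already touched, an interval of candidate crossing columns for $P$. It answers so that:
\begin{itemize}
\item a query off the path only halves the relevant column interval of that row, and reveals nothing about the outer interval;
\item the side of the outer interval on which $p$ lies is learned only once the crossing in the queried row is pinned down, and the adversary then keeps the larger half.
\end{itemize}
The inner accounting mirrors the proof of Lemma~\ref{lem:search}: one residual cost per row, a candidate set for the outer answer, and at least $\log(\text{width})$ queries needed to resolve any single row.

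\textbf{The main obstacle.} The hard step is consistency. Answers in different rows must be realizable by a single monotone path, and rows are coupled, because the path's crossing column is nondecreasing in the row index. To handle this I would restrict the family to paths whose crossing in row $k$ lies in a prescribed disjoint block of columns. One option is to split $[N]$ into $\sqrt N$ row-bands, each owning its own column band; the standard alternative is to choose crossing intervals in a nested staircase so that any choices in distinct rows remain compatible. With this restriction, $\sqrt N$ outer candidates and inner width $\sqrt N$ still give $\Omega(\log N)$ rounds, each costing $\Omega(\log N)$ queries. Inner searches in different rows are then independent, and the product argument of Lemma~\ref{lem:search} applies verbatim. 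Finally, I must check that a fixed point cannot be output before the outer interval shrinks to one row; as in Lemma~\ref{lem:search}, this holds because two consistent inputs with different fixed points survive. The result is $U_2(N)\ge c_0(\log N)^2$ for $N\ge N_0$.
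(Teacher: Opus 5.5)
Your overall architecture matches the paper's Appendix~\ref{app:planar}: the Etessami et al.\ path map, a staircase whose turns are hidden in public blocks, a threshold selecting which turn is the target, and a product adversary with one residual search per hidden quantity plus a shrinking set of candidate thresholds. The gap is the step you yourself flag as the main obstacle, and the concrete fix you propose does not give a hard family.

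\emph{The proposal fails as stated.} A unit-step monotone path moves between consecutive rows by a step that keeps the other coordinate fixed. So it cannot stay in column block $C_k$ on every row of band $k$ and in the disjoint block $C_{k+1}$ on every row of band $k+1$. Some row must carry a horizontal segment running from $C_k$ into $C_{k+1}$.

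\emph{A public transition row leaks signs.} Suppose that row is public, as a band boundary would be. Then the point of that row in the last column of $C_k$ is certainly a path vertex: either interior to the segment or its corner. One query there returns a step forward or backward along the path, which is a threshold comparison at cost one. Each outer comparison then costs $O(1)$ queries instead of $\Omega(\log N)$, and the product bound collapses.

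\emph{Rows are not independent.} If the path inside a band is otherwise free, crossings in different rows of one band are coupled by monotonicity. So ``inner searches in different rows are independent'' is false; the independent unit has to be a single hidden coordinate, not a row.

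Your alternative (a nested staircase) is not specified enough to check.

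\emph{What the argument needs.} Every path vertex whose image depends on a sign must be reachable only by an exact hit on a hidden coordinate that costs $\Omega(\log n)$ to find. Each grid value must also depend on only $O(1)$ hidden coordinates.

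\emph{How the paper achieves this.} It hides \emph{both} coordinates of each turn in the same public interval: $x_j=(j-1)n+1+\xi_j$ and $y_j=(j-1)n+1+\eta_j$, with $x_j,y_j\in I_j$ and $J$ chosen independently of all $\xi,\eta$. The vertical segment at $x_j$ can be hit only through equality with $\xi_j$, which returns $b_j$. The horizontal segment in row $y_j$ can be hit only through equality with $\eta_j$, which returns $b_{j+1}$. This matters because that segment is known to cover the columns $jn+1$ and $jn+2$; hiding its row is exactly what your public band boundary lacks.

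The paper then shows three things:
\begin{itemize}
\item Every full value $F(a,b)$ is determined by at most three comparisons with hidden numbers.
\item The source bound $\log(n+1)\lceil\log(M-1)\rceil$ holds, via an interval adversary per hidden number together with a threshold adversary. The latter must handle the same sign $b_j$ being reachable through both $\xi_j$ and $\eta_{j-1}$.
\item This bound transfers to $U_2(Mn+2)$ with a factor $1/3$.
\end{itemize}

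This simulation step is also what your appeal to Lemma~\ref{lem:search} ``verbatim'' is missing. One grid query can depend on several hidden quantities, so the per-quantity residual accounting must be charged to source comparisons, not directly to grid queries.
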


Appendix~\ref{app:planar} supplies a self-contained proof, with
\(c_0=1/96\) and \(N_0=16\). To increase the side length, we use a monotone
extension that leaves the set of fixed points unchanged.

\begin{lemma}[Side-length padding]\label{lem:padding}
If \(L\le N\), then \(U_d(N)\ge U_d(L)\).
\end{lemma}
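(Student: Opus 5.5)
The plan is a simulation argument. From any map on the smaller grid \([L]^d\) with a unique fixed point, I will build a map on \([N]^d\) whose fixed points are exactly those of the original. One query to the padded map should cost at most one query to the original. Since every input in \(\HH_{d,L}\) then yields an input in \(\HH_{d,N}\), an algorithm for the large grid turns into an algorithm for the small grid with the same number of queries.

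\emph{Construction.} Given \(H\in\HH_{d,L}\), define \(\widehat H:[N]^d\to[N]^d\) by
\[
\widehat H(x)=H\bigl(P_L(x)\bigr),
\]
where \(P_L\) is the coordinatewise clamp from Section~\ref{sec:construction}.

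\emph{Monotonicity.} The clamp \(P_L\) is monotone and \(H\) is monotone, so the composition \(\widehat H\) is monotone. Its values lie in \([L]^d\subseteq[N]^d\), so it is a legal self-map of \([N]^d\).

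\emph{Fixed points.} Suppose \(\widehat H(x)=x\). Then \(x\) lies in the range of \(\widehat H\), which is contained in \([L]^d\). On \([L]^d\) the clamp is the identity, so \(P_L(x)=x\) and therefore \(H(x)=x\). Conversely, every fixed point of \(H\) is fixed by \(\widehat H\), since \(P_L\) does not move it. Hence \(\Fix(\widehat H)=\Fix(H)\), and this set is a single point. In particular \(\widehat H\in\HH_{d,N}\).

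\emph{Simulation.} Let \(A\) be an optimal algorithm for \(\HH_{d,N}\) in the output convention, with worst-case cost \(U_d(N)\). To solve an unknown \(H\in\HH_{d,L}\), run \(A\). Whenever \(A\) queries a point \(x\), query \(H\) at \(P_L(x)\) and return the answer to \(A\). This is exactly the value \(\widehat H(x)\), so \(A\) sees a legal input from its family, stops within \(U_d(N)\) queries, and outputs the fixed point of \(\widehat H\). That point lies in \([L]^d\) and is the fixed point of \(H\), so we output it. This gives \(U_d(L)\le U_d(N)\).

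There is no real obstacle in this argument. The only step needing care is uniqueness: one must rule out new fixed points outside \([L]^d\). The range argument handles this, because every output of \(\widehat H\) lies in \([L]^d\). The same simulation also shows \(h_d(L)\le h_d(N)\), since a queried fixed point \(x\) of \(\widehat H\) lies in \([L]^d\), so the simulated query \(P_L(x)=x\) is itself a query to the fixed point of \(H\).
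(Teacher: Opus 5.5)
Your proposal is correct and is essentially the paper's own proof: the clamped extension \(\widehat H=H\circ P_L\), the range argument ruling out new fixed points outside \([L]^d\), and the one-for-one query simulation. Your additional remark that the same simulation gives \(h_d(L)\le h_d(N)\) is also correct, though this lemma does not need it.
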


\begin{proof}
Given a monotone map \(F:[L]^d\to[L]^d\) with a unique fixed point, define
\(\widehat F(x)=F(P_L(x))\) on \([N]^d\). The extension is monotone and its range is
contained in \([L]^d\), so every fixed point belongs to that smaller box.
Within that box, \(P_L\) is the identity, and the fixed points of
\(\widehat F\) and \(F\) coincide. A query to \(\widehat F\) uses one query
to \(F\), giving the result.
\end{proof}

\begin{proof}[Proof of Theorem~\ref{thm:hierarchy}]
The case \(r=0\) is the planar quadratic-logarithmic lower bound
(Lemma~\ref{lem:planar}), since \(d_0=2\).
Assume the statement holds for \(r\), and note that
\(d_{r+1}=4d_r+1\). For an integer \(N\) sufficiently large, set
\(n=\floor{\sqrt N}\) and apply the dimension-amplification theorem with
\(M=n\). For \(n\ge\max\{3,N_r\}\),
\begin{align*}
U_{d_{r+1}}(n^2)
&\ge\frac{U_{d_r}(n)}4\ceil{\log(n-1)}\\
&\ge\frac{c_r}{8}(\log n)^{r+3},
\end{align*}
where we used \(\log(n-1)\ge\tfrac12\log n\) for \(n\ge3\).
By side-length padding (Lemma~\ref{lem:padding}), the same lower
bound holds at side length \(N\). For \(N\ge16\),
\[
n\ge\frac{\sqrt N}{2},\qquad \log n\ge\frac14\log N.
\]
Consequently the induction holds, for example, with
\[
c_{r+1}=\frac{c_r}{8\cdot4^{r+3}},\qquad
N_{r+1}=\max\bigl\{16,\max\{3,N_r\}^2\bigr\}.
\]
These constants depend only on \(r\). The closed formula for \(d_r\)
follows from \(d_0=2\) and \(d_{r+1}=4d_r+1\).
\end{proof}

\begin{proof}[Proof of Corollary~\ref{cor:no-exponent}]
Given a finite real number \(a\), choose a fixed integer \(r\ge0\) with
\(r+2>a\). In the fixed dimension \(d=d_r\),
\[
Q_d(N)\ge U_d(N)\ge c_r(\log N)^{r+2}
\]
for all sufficiently large \(N\). Hence
\(Q_d(N)/(\log N)^a\ge c_r(\log N)^{r+2-a}\to\infty\).
Since \(d=d_r\) is fixed, this rules out an \(O((\log N)^a)\) upper bound even if its
implicit constant is allowed to depend on \(d\).
\end{proof}

\section{Discussion}\label{sec:discussion}

The deterministic query complexity of Tarski fixed-point search cannot
be bounded by a common fixed power of \(\log N\) in all fixed
dimensions, even when the multiplicative constant may depend on the
dimension. Our lower-bound hierarchy proves this while preserving a
unique fixed point at every step. It does not give tight bounds in the
resulting dimensions or determine whether dimension nine is the smallest
dimension admitting a cubic-logarithmic lower bound. Reducing the
dimensional cost of reflection and of combining adjacent instances is
a natural direction for strengthening the result.

The amplification theorem is deterministic. Its proof chooses answers
against the next query of a deterministic algorithm; it does not provide
a distributional lower bound that can simply be substituted into a
randomized analysis. Extending the product bound to randomized algorithms
would require a separate argument. On the constructive side, the essential
property is already independent of the local family: any monotone map
with a unique fixed point can be used as a local instance. This is the
property that makes repeated amplification possible.

\section*{AI Disclosure}

Generative AI tools were used for literature searches, translation, and
language editing. They also assisted in the development of many parts of the
mathematical argument and exposition in this paper. The human authors have verified all
AI-generated content and take full responsibility for it.

\appendix
\section{The planar starting bound}\label{app:planar}

The iteration in Section~\ref{sec:consequences} starts from a
quadratic-logarithmic lower bound in dimension two. This appendix proves
that bound under our full-value query model, allowing an algorithm to
output an unqueried fixed point. We use the monotone path map of
Etessami et al.\ \cite{epry}; both the lower bound and the
unique-fixed-point property are known. The purpose is to supply a
self-contained base case for Lemma~\ref{lem:planar}.

We describe a staircase with independently hidden turn coordinates,
choose its target by a hidden threshold, and then define the path map.
Public intervals restrict which hidden coordinates can affect any one
query, allowing a full function value to be evaluated with at most three
source comparisons. A comparison adversary supplies the two logarithmic
factors. The construction and the cost analysis follow the same
representation-and-product pattern as the main proof.

\Needspace{8\baselineskip}
\subsection{A staircase with hidden coordinates}

As in the main construction, we hide threshold comparisons behind
separate searches. Here each local search is for an unknown integer,
rather than a fixed point. These hidden integers specify the turns of
a staircase path, and the threshold selects one turn as the target.
The geometry will let us recover the threshold from that target.

Fix integers \(M\ge2\) and \(n\ge2\). The source input consists of
\(2M\) independently chosen numbers
\[
\xi_1,\ldots,\xi_M,\eta_1,\ldots,\eta_M\in[n]
\]
and a threshold \(J\in[M-1]\). All choices are allowed: the input domain is
\([n]^{2M}\times[M-1]\). For \(1\le j\le M+1\), set
\(b_j=+1\) when \(j\le J\) and \(b_j=-1\) when \(j>J\).
In particular, \(b_1=+1\) and \(b_M=b_{M+1}=-1\) are public.
A source query chooses one hidden number and a point \(z\in[n]\), and
receives their three-way comparison. Upon equality, a query to \(\xi_j\)
also receives \(b_j\), while a query to \(\eta_j\) receives \(b_{j+1}\).
The source task is to determine \(J\).

The hidden numbers determine the geometry of the path, independently
of which turn will be the target. We use them as turn coordinates of
a staircase. Put
\(L=Mn+2\) and introduce the disjoint public intervals
\[
I_j=\{(j-1)n+2,\ldots,jn+1\},\qquad j\in[M].
\]
Set
\[
x_j=(j-1)n+1+\xi_j,\qquad
y_j=(j-1)n+1+\eta_j,
\]
and add the public endpoints
\(x_0=y_0=1\) and \(x_{M+1}=y_{M+1}=L\).
Both coordinate sequences are strictly increasing. For
\(j=0,\ldots,M\), let the path \(P\) go horizontally from
\((x_j,y_j)\) to \((x_{j+1},y_j)\), then vertically to
\((x_{j+1},y_{j+1})\), always in unit steps. Its target vertex is
\[
z^*=(x_J,y_J).
\]
The public interval containing the first coordinate of \(z^*\) determines
\(J\). Thus a fixed point identifies the source answer without requiring
any additional oracle query.

For \(1\le j\le M\), a path vertex strictly inside the vertical segment
at \(x_j\) precedes the target exactly when \(j\le J\). A vertex strictly
inside the horizontal segment at \(y_j\) precedes it exactly when
\(j<J\). Thus movement along these segments toward the target is governed
by \(b_j\) and \(b_{j+1}\), respectively. This explains the different
signs returned by equality queries to \(\xi_j\) and \(\eta_j\).
The next subsection extends this movement to a monotone map on the whole
grid.

\subsection{A monotone map with the prescribed target}

We turn the staircase into a monotone map whose only fixed point is
the selected target vertex. On the path, a point moves one step toward
the target. Off the path, it moves toward the path while preserving
its coordinate sum. These rules leave only the target fixed; the main check is that they
preserve coordinatewise order.

Index the vertices of \(P\) as \(v_2,v_3,\ldots,v_{2L}\), where the
subscript is the coordinate sum:
\((v_s)_1+(v_s)_2=s\). Each step is \((1,0)\) or \((0,1)\).
Write \(c_s=(v_s)_1\) for \(2\le s\le2L\), and
\(z^*=v_{s^*}\), so \(s^*=x_J+y_J\).
At a path vertex other than the target, the map moves one step toward the
target; the target itself stays fixed:
\begin{equation}\label{eq:path-on}
F(v_s)=
\begin{cases}
v_{s+1},&s<s^*,\\
v_s,&s=s^*,\\
v_{s-1},&s>s^*.
\end{cases}
\end{equation}
An off-path point moves toward the path on its line of constant coordinate
sum. For such a point \((a,b)\), put \(s=a+b\) and define
\begin{equation}\label{eq:path-off}
F(a,b)=
\begin{cases}
(a+1,b-1),&a<c_s,\\
(a-1,b+1),&a>c_s.
\end{cases}
\end{equation}
These values stay in \([L]^2\). If \(a<c_s\), then \(a<L\) and
\(b>s-c_s\ge1\); the other case is symmetric. Every off-path point
moves, and every path vertex except the target moves to a neighbor.
Thus \(z^*\) is the unique fixed point.

\paragraph{Monotonicity of the path map.}
It suffices to compare horizontally and vertically adjacent points.
The only nontrivial comparisons are those in which an input or its
neighbor lies on the path; away from the path, both points undergo
the same translation. Consider \(q=(a,b)\) and \(q'=(a+1,b)\) in the grid,
and write \(s=a+b\). Since each step of \(P\) is horizontal or vertical,
\(c_{s+1}-c_s\in\{0,1\}\).

If \(a<c_s\), then \(a+1\le c_{s+1}\). When the inequality is strict,
\(F(q')=F(q)+(1,0)\). In the equality case \(q'=v_{s+1}\), we have
\(c_s=c_{s+1}=a+1\), and hence \(F(q)=v_s\). Each possible value of
\(F(v_{s+1})\) is at least \(v_s\).
If \(a>c_s\), then \(a+1>c_{s+1}\), and again
\(F(q')=F(q)+(1,0)\).

It remains to consider \(q=v_s\). If also \(q'=v_{s+1}\), the images are
ordered because the map \(s\mapsto s+\operatorname{sgn}(s^*-s)\) is
nondecreasing. Here \(\operatorname{sgn}\) takes the values \(-1,0,1\)
according to the sign of its argument. If \(q'\) is off the path, then
\(c_{s+1}=a\) and
\(F(q')=(a,b+1)=v_{s+1}\). Formula~\eqref{eq:path-on} gives
\(F(q)\preceq v_{s+1}\). Exchanging the two coordinates proves the
vertical comparison, so \(F\) is monotone.

\paragraph{A row description for query evaluation.}
The definition above uses the path vertex of a given coordinate sum.
The simulation will instead locate the path in the query's row, so we
first justify that equivalent description.
In each row, the path occupies a nonempty interval of grid points,
possibly a singleton. Points strictly to its left use the first branch of
\eqref{eq:path-off}, and points strictly to its right use the second.
To see the first assertion, let \((u,b)\) be the first path vertex in the
row and suppose \(a<u\). The vertex \(v_{a+b}\) precedes \((u,b)\),
so its second coordinate is at most \(b\). Its first coordinate is
therefore at least \(a\); equality would put \((a,b)\) on the path, a
contradiction. Hence \(a<c_{a+b}\). The argument with the last path
vertex in the row proves the second assertion.

\subsection{Evaluating the map}

To transfer a lower bound from the hidden-integer search, a grid query
must be simulated at constant cost. We show that a full value
\(F(a,b)\) uses at most three source comparisons. The public intervals
containing the staircase turns let us locate the path in the query's
row; every threshold sign needed for the output is either public or
returned by one of these comparisons upon equality.

We now give the evaluation rule. For \(j\in[M]\), comparing \(a\) with \(x_j\) requires a
query only when \(a\in I_j\): outside the interval the answer is public,
and inside it we query \(\xi_j\) at \(a-((j-1)n+1)\).
Equality also supplies \(b_j\). Likewise, comparing \(b\) with \(y_j\)
costs at most one query, and equality supplies \(b_{j+1}\).
The artificial endpoints require no queries.

\paragraph{Interior rows away from a horizontal segment.}
First consider an interior row \(1<b<L\). The public intervals partition
\(\{2,\ldots,L-1\}\), so the row determines
\(k=\floor{(b-2)/n}+1\) with \(b\in I_k\).
Compare \(b\) with \(y_k\). If \(b<y_k\), the row intersects the path
only at \((x_k,b)\), strictly inside a vertical segment. If \(b>y_k\),
that point is instead \((x_{k+1},b)\), also strictly inside a vertical
segment. Both statements follow from
\[
y_{k-1}<\min I_k\le b\le\max I_k<y_{k+1},
\]
which also holds for \(k=1\) and \(k=M\) by the artificial endpoints.
In either vertical case, compare \(a\) with the corresponding \(x_j\).
A strict comparison determines the off-path value by the row description
above. Equality returns
\[
F(a,b)=(a,b+b_j).
\]
When \(j\le M\), the equality query to \(\xi_j\) supplies \(b_j\).
When \(j=M+1\), the segment is at \(a=L\) and \(b_{M+1}=-1\) is
public.

\paragraph{Rows containing a horizontal segment.}
If \(b=y_k\), the path occupies the horizontal interval
\([x_k,x_{k+1}]\) in that row. Compare \(a\) with its two endpoints.
Outside the interval, use \eqref{eq:path-off}. On the interval, the
equality \(b=y_k\) has already supplied \(b_{k+1}\), and the values
are as follows:
\begin{itemize}
\item If \(x_k<a<x_{k+1}\), return \((a+b_{k+1},b)\).
\item If \(a=x_{k+1}\), return \((a,b+1)\) when \(b_{k+1}=+1\),
and \((a-1,b)\) when \(b_{k+1}=-1\).
\item If \(a=x_k\), the equality query to \(\xi_k\) has also supplied
\(b_k\). Return \((a,b-1)\) when \(b_k=-1\), return \((a+1,b)\)
when \(b_{k+1}=+1\), and otherwise return \((a,b)\).
\end{itemize}
In the last case, the remaining sign pair is
\((b_k,b_{k+1})=(+1,-1)\), which means \(k=J\). The threshold promise
excludes the pair \((-1,+1)\). Thus these rules give exactly the
preceding vertex, following vertex, or target prescribed by
\eqref{eq:path-on}.

\paragraph{Boundary rows and query count.}
For the bottom row \(b=1\), compare \(a\) with \(x_1\). Return
\((a+1,1)\) if \(a<x_1\), \((a,2)\) if \(a=x_1\), and
\((a-1,2)\) if \(a>x_1\). The target follows the initial horizontal
segment because \(J\ge1\).
For the top row \(b=L\), return \((a+1,L-1)\) when \(a<L\), and
\((L,L-1)\) at \((L,L)\). The target precedes the final vertical
segment because \(J\le M-1\).

A full value \(F(a,b)\) therefore requires at most one comparison with a
\(y\)-coordinate and two with \(x\)-coordinates. Boundary queries
require no more. Every sign used is either public or supplied by an equality
query to its associated hidden coordinate.

\subsection{The query lower bound}

The source search has two costs: reaching a hidden integer to obtain a
threshold comparison, and obtaining enough comparisons to determine
the threshold. We prove that these costs multiply even when queries to
different integers are interleaved. Combining this product bound with
the three-comparison simulation then gives the planar Tarski lower bound.

Every deterministic algorithm for the source task requires at least
\begin{equation}\label{eq:planar-source}
\log(n+1)\ceil{\log(M-1)}
\end{equation}
queries in the worst case. As in Section~\ref{sec:search-lower}, the
proof combines a local adversary, a threshold adversary, and a common
completion argument.

\paragraph{The cost of an equality query.}
For each hidden number, maintain a nonempty interval of possible values. If the interval has
\(m\ge2\) elements, retain a larger nonempty strict side of the query
rather than return equality. At least \((m-1)/2\) values remain.
A query outside the interval leaves it unchanged. When the interval is a
singleton, equality is returned only when that singleton is queried;
the number then remains fixed at its revealed value.

After \(t\) nonequality queries to one number, at least
\((n+1)/2^t-1\) values remain. If equality is first returned on its
\(q\)-th local query, the preceding interval has size one, so
\[
1\ge\frac{n+1}{2^{q-1}}-1,
\qquad q\ge\log(n+1).
\]
This argument applies separately to every number, even when queries to
different numbers are interleaved arbitrarily.

\paragraph{The number of threshold signs required.}
Maintain also a set \(K\) of possible thresholds, initially \([M-1]\).
When equality is first obtained for a number carrying sign \(b_j\), retain
a larger of \(K\cap\{J\ge j\}\) and \(K\cap\{J<j\}\), and return
the corresponding sign. A previously determined sign leaves \(K\)
unchanged, including when it is revealed by a different number. Repeated
equality queries return the same sign. After equality has been obtained for
\(s\) distinct numbers,
\[
|K|\ge\ceil{(M-1)/2^s}.
\]
For every \(J\in K\), independently choosing each hidden number from
its remaining interval gives a legal input with exactly this transcript.
Nonequality answers contain no threshold sign, and all equality answers
agree with every retained threshold. Thus the numerical comparisons impose
no further restriction on \(J\), and the algorithm cannot stop while two
thresholds remain. It must obtain equality for at least
\(\ceil{\log(M-1)}\) distinct numbers, each costing at least
\(\log(n+1)\) queries. The queries for these numbers are disjoint,
proving \eqref{eq:planar-source}. When \(M=2\), the right-hand side is
zero and the bound holds trivially.

\paragraph{Transferring the bound and choosing the parameters.}
The unique fixed point of \(F\) identifies \(J\) through its first
coordinate. A \(q\)-query algorithm for this family, even one that outputs
an unqueried point, therefore gives a source algorithm using at most
\(3q\) queries. Hence
\begin{equation}\label{eq:planar-product}
U_2(Mn+2)\ge\frac13\log(n+1)\ceil{\log(M-1)}.
\end{equation}
For any integer \(N\ge16\), set \(m=\floor{\sqrt{N-2}}\), take
\(M=n=m\), and put \(L=m^2+2\le N\). We have \(m\ge3\) and
\[
m\ge\sqrt{N-2}-1\ge\sqrt N-2\ge\frac{\sqrt N}{2}.
\]
Consequently \(\log m\ge\tfrac14\log N\), and
\(\log(m-1)\ge\tfrac12\log m\). Applying Lemma~\ref{lem:padding} to
\eqref{eq:planar-product} gives
\[
U_2(N)\ge\frac13\log(m+1)\log(m-1)
\ge\frac16(\log m)^2
\ge\frac1{96}(\log N)^2.
\]
This proves Lemma~\ref{lem:planar}, with \(c_0=1/96\) and \(N_0=16\).

\Needspace{10\baselineskip}
\section{The threshold promise and standard reductions}\label{app:reductions}

Our reduction uses an auxiliary oracle with a global promise: its
comparison signs change from \(+1\) to \(-1\) at a single threshold.
It is not a reduction from unrestricted Tarski inputs to unique Tarski.
The promise ensures that the constructed map has just one fixed point,
and it explains why the standard-reduction obstruction of Chen, Li, and
Yannakakis \cite{clyunique} does not apply here.

The essential distinction concerns partial query records. Two records
may agree on every query they share yet require signs that cannot come
from one threshold. They then have no common legal completion. We first
illustrate this failure and then explain why the obstruction for
unrestricted source functions relies on the opposite property.

\paragraph{Where the promise is used.}
For the map \(F\) defined in \eqref{eq:F}, the threshold promise
\eqref{eq:bits} ensures that \(b_j+b_{j+1}=0\) at exactly one adjacent pair.
This is where the promise
enters the uniqueness proof for \(F\). Monotonicity uses the monotone labels
and the ordered range boxes, and continues to hold for other sign sequences.
The distinction can be seen in a small example.

\paragraph{Nonconflicting records without a legal joint completion.}
Take \(M=4\), fix \(p\in[n]^d\), and set \(H_i(x)=p\) for every \(i,x\).
Each map is monotone and has the unique fixed point \(p\).
For the legal threshold \(J=1\), evaluating \(F\) at its fixed point with
last coordinate \(t=2\) uses only the source queries \((1,p)\) and
\((2,p)\), whose signs are \(+1,-1\). For the legal threshold \(J=3\),
evaluation at the fixed point with \(t=8\) uses only \((3,p)\) and
\((4,p)\), again with signs \(+1,-1\).
The two query records have disjoint source positions, so none of their
answers conflict. Their union, however, requires the sign sequence
\[
(+1,-1,+1,-1),
\]
which has no threshold of the form \eqref{eq:bits}. Substituting this
sequence into \eqref{eq:F} still gives a monotone map, but now the middle
positions of all three pairs give fixed points, at \(t=2,5,8\).
Thus two individually legal records can be combined into consistent
oracle answers while admitting no common completion within the promised
source domain.

\paragraph{Why the standard-reduction obstruction is different.}
This common-completion issue is relevant to the standard-reduction theorem
of Chen, Li, and Yannakakis \cite[Theorem~6.1]{clyunique}.
Their source problem is total Tarski: the input is an arbitrary function
on the source grid, and a valid answer is either a fixed point or a pair
witnessing nonmonotonicity. A standard reduction must produce a monotone
target map with a unique fixed point for \emph{every} source function,
including nonmonotone ones. For each target point, its function value is
computed by a fixed decision tree of depth at most \(D\). If a target
fixed point can be decoded into a valid source answer using at most \(c\)
further source queries, their theorem gives a source algorithm using at
most \(D^2+c\) queries.

In that unrestricted source domain, any two query records that agree on
their common queries have a common completion. To see how uniqueness
constrains such records, consider a path in the evaluation decision tree
for a target point \(x\) whose output is \(x\). The answers along this path
certify that \(x\) is fixed. Two such paths for different target points
must disagree on some common source query: otherwise their answers could
be combined and completed to a source function whose target map has both
fixed points. These pairwise conflicts are what the quadratic query bound
exploits. In the example above, the threshold promise excludes the joint
completion, even though the two records have no conflicting answers.

\paragraph{What the four-query simulation establishes.}
The four-query simulation in Lemma~\ref{lem:simulation} uses the source
oracle \eqref{eq:source-oracle}. Its input consists of the independent maps
\(H_i\) and one threshold \(J\); a query returns the full value \(H_i(x)\)
and also returns \(b_i\) precisely when \(H_i(x)=x\).
Lemma~\ref{lem:search} proves the lower bound for this oracle, including its
fixed-point comparison signs and the threshold promise. Each evaluation of
\(F\) uses at most four such queries, and its unique fixed point determines
the source answer \(J\).

\end{document}